\newcommand{\N}{\mathbb{N}}
\newcommand{\I}{{\cal I}}
\newcommand{\R}{\mathbb R}
\newtheorem{conj}{Conjecture}
\newtheorem{theorem}{Theorem}[section]
\newtheorem{lemma}{Lemma}[section]
\newtheorem{prop}{Proposition}[section]
\newtheorem{cor}{Corollary}[section]
\newtheorem{remark}{Remark}[section]
\newtheorem{defi}{Definition}[section]
\newtheorem{acknowledgment*}{Acknowledgment}
\newtheorem{assumption}{Assumption}[section]
\newcommand{\be}{\begin{equation}}
\newcommand{\ee}{\end{equation}}
\title{\Large \textbf{ Happy family of stable marriages} }
\author{\textsc{Gershon Wolansky }\footnote{Department of Mathematics, Technion, Israel Inst. of Technology}}
\begin{document}

\maketitle
\tableofcontents
\abstract{
In this chapter we study some aspects of the problem of stable marriage. There are two distinguished marriage plans:  the fully transferable case, where money can be transferred between the participants, and the fully non transferable case where each participant has its own rigid preference list regarding the other gender. We continue to discuss intermediate  partial transferable cases.  Partial transferable plans can be approached as either special cases of cooperative games using the notion of a core,  or as a generalization of the cyclical monotonicity property of the fully transferable case (fake promises). We shall introduced these two approaches, and prove the existence of stable marriage for the fully transferable and non-transferable plans. 
}
\vskip .2in\noindent
Keywords: Cyclic monotonicity, core, cooperative games, Monge-Kantorovich

\section{Introduction}
	Consider two sets $\I_m,\I_w$  of  $N$ elements each. We may think about $\I_m$ as a set of men and $\I_w$ as a set of women. We denote a man in $\I_m$ by $i$ and a woman in $\I_w$ by $i^{'}$.
	
	A {\em marriage  plan} (MP) is a bijection which assign to each man in $\I_m$ a unique woman in $\I_w$ (and v.v). A matching of a man $i\in\I_m$ to a woman $j^{'}\in\I_w$ is denoted by $ij^{'}$.  The set of all such matchings is isomorphic  to the set of permutations on $\{1, \ldots N\}$. Evidently, we can arrange the order according to a given marriage plan  and represent this plan       as     $ \{ii^{'}\}; i=1\ldots N$.

	The MP $\{ii^{'}\}$ is called {\em stable} if and only if there are no {\em blocking pairs}. A blocking pair is composed of a man $i$ and a woman $j^{'}\not=i^{'}$ such that {\em both} $i$ prefers $j^{'}$ over his assigned woman $i^{'}$ and $j^{'}$ prefers $i$ over her assigned man $j$.
	
	In order to complete this definition we have to establish a criterion of preferences over the possible matchings in $\I_m\times \I_w$. 
	\par\noindent
	Let us consider  two extreme cases. The first is the {\em fully transferable} (FT) case \cite{shap1, SS, Bec1,Bec2}. Here we assume a {\em utility value} $\theta_{ij^{'}}$ for a potential  matching $ij^{'}$.  If $ij^{'}$ are matched, 
	they can split this reward $\theta_{ij^{'}}$ between themselves as they wish. 

	The second case is fully non-transferable (FNT) \cite{GSH, knu, Dub}. This involves  no utility value (and no money reward). Each participant (man or woman) list the set of participants of the other gender according to a preference list: For each man $i\in\I_m$ there exist an order relation $\succ_i$ on $\I_m$, such that $j^{'}\succ_i k^{'}$ means that the man $i$ will prefer the woman $j^{'}$ over the woman $k^{'}$. Likewise, each woman $i^{'}\in\I_w$ have its own order relation $\succ_{i^{'}}$ over $\I_m$. 
	
	These two notions seems very different, and indeed they are, not only because the first one seems to  defines the preference  in materialistic terms and the second hints on "true love". In fact, we can quantify 
	the non-transferable case as well: There may be a reward $\theta^m_{i j^{'}}$  for a man $i$ marrying a woman $j{'}$, such that $j^{'}\succ_ik^{'}$ iff $\theta^m_{ij^{'}}>\theta^m_{ik^{'}}$. Likewise, 
	$\theta^w_{ij^{'}}$ quantifies the reward the the woman $j^{'}$ obtains while marrying the man $i$.
	
	Given a matching $\{ii^{'}\}$, a blocking pair in the FNT case is a pair $ij^{'}$, $j^{'}\not =i^{'}$ such that
	the man $i$ prefers the woman $j^{'}$ over his matched woman
	$i^{'}$ (i.e $j^{'}\succ_ii^{'}$, or $\theta^m_{ij^{'}}>\theta^m_{ii^{'}}$ ) {\em and} the woman $j^{'}$ prefers $i$ over her matched man $j$ ($j\succ_{i^{'}}i$, or $\theta^w_{ij^{'}}>\theta^w_{jj^{'}}$ ). Thus, a blocking pair
	$ij^{'}$ is defined by 
	\be\label{bp1} \min\{ \theta^m_{ij^{'}}-\theta^m_{ii^{'}}, \ \theta^w_{ij^{'}}-\theta^w_{jj^{'}}\}>0 \ . \ee
	\begin{defi}\label{firstd}
		The matching $\{ii^{;}\}$ is stable if and only if 
		$$ \min\{ \theta^m_{ij^{'}}-\theta^m_{ii^{'}}, \ \theta^w_{ij^{'}}-\theta^w_{jj^{'}}\}\leq 0$$
		for any $i,j\in\I_m$ and $i^{'}, j^{'}\in\I_w$.  
	\end{defi}
	Let 
	\be\label{sum}\theta_{ij^{'}}:=\theta^m_{ij^{'}}+\theta^w_{ij^{'}} \ . \ee
	Definition \ref{firstd} implies that the condition 
	\be\label{weak}\theta_{ii^{'}}+\theta_{jj^{'}}\geq \theta_{ij^{'}} + \theta_{ji^{'}} \  \ee
	is {\em necessary} for all $i,j$ for the stability of $\{ii^{'}\}$ in the FNT case.  
	
	Let us consider now the fully transferable (FT) case. Here a married pair $ii^{'}$ can share the rewards $\theta_{ii^{'}}$ for their marriage. 
	Suppose the man $i$  cuts $u_i$  and the woman $i^{'}$ cuts $v_{i^{'}}$ form their mutual reward $\theta_{ii^{'}}$. Evidently, $u_i+v_{i^{'}}= \theta_{ii^{'}}$. If 
	\be\label{bp2} u_i+v_{j^{'}} < \theta_{ij^{'}} \ \ee
	for some $j^{'}\not= i^{'}$	then $ij^{'}$ is a blocking pair, since both $i$ and $j^{'}$ can increase their cuts to match the mutual reward $\theta_{ij^{'}}$. 
	Hence 
	$$ \theta_{ij^{'}}+\theta_{ji^{'}} > u_i+v_{j^{'}}+u_j+v_{i^{'}} = \theta_{ii^{'}}+\theta_{jj^{'}}$$
	so (\ref{weak}) is a necessary condition for 
	the stability in the FT case as well.
	
	Evidently, condition (\ref{weak}) is {\em not} a sufficient one, unless $N=2$ in the FT case.

	\begin{tcolorbox}
		
		A simple example ($N=2$):
		\begin{center}
			\begin{tabular}{ c c c }
				$\theta^m$& $w_1$& $w_2$\\
				$m_1$& 1 & 0 \\
				$m_2$ & 0 & 1
			\end{tabular}
			\ \ \ \ ; \ \ \
			\begin{tabular}{ c c c }
				$\theta^w$& $w_1$ & $w_2$\\
				$m_1$ & 1 & 5 \\
				$m_2$& 0 & 1
			\end{tabular}
		\end{center}
		The matching $\{ 11^{'},  22^{'}\}$ is FNT stable. Indeed  $\theta^m_{11^{'}}=1> \theta^m_{12^{'}}=0$
		while $\theta^m_{22^{'}}=1>\theta^m_{21^{'}}=0$, so both men are happy, and this is enough for FNT stability, since   that  neither $\{ 12^{'}\}$ nor $\{21^{'}\}$ is a blocking pair. On the other hand, if the married pairs share their rewards
		$\theta_{ij^{'}}=\theta^m_{ij^{'}}+\theta^w_{ij^{'}}$ we get
		\begin{center}
			\begin{tabular}{ c c c }
				$\theta$& $w_1$& $w_2$\\
				$m_1$& 2 & 5 \\
				$m_2$ & 0 & 2
			\end{tabular}
		\end{center}	
		so
		$$ \theta_{11^{'}}+\theta_{22^{'}}=4< 5=\theta_{12^{'}}+\theta_{21^{'}}\ , $$
		thus $\{21^{'}, 12^{'}\}$ is the  stable marriage in the  FT case.
	\end{tcolorbox}
	However, we may extend the necessary condition  (\ref{weak}) in the FT case  as follows:
	
	Consider the couples 
	$i_1i_1^{'}, \ldots i_ki_k^{'}$, $k\geq 2$ . The sum of the rewards for these couples is
	$\sum_{l=1}^k\theta_{i_li^{'}_l}$.
	Suppose they perform a  "chain deal" such that man $i_l$  marries woman $i^{'}_{l+1}$ for $1\leq l\leq k-1$, and the last  man $i_k$ marries the first woman $i^{'}_1$. The  net reward for the new matching is $\sum_{l=1}^{k-1}\theta_{i_li^{'}_{l+1}}+ \theta_{i_ki^{'}_1}$.
	
	This  leads to a definition of a {\em blocking chain}:
	\begin{defi}\label{chain}
		A  chain $i_1i^{'}_{1}, \ldots i_{k} i^{'}_{k}$ of married couples forms a blocking chain iff 
		\be\label{chain*} \sum_{l=1}^k(\theta_{i_li^{'}_{l+1} }-\theta_{i_li^{'}_l} )>0\  \ee
		where $i^{'}_{k+1} :=i^{'}_{1}$.  If there are no blocking chains then the matching $\{ii^{'}\}$ is called {\em cyclically monotone} \cite{Roc}.
	\end{defi}
	The notion of a blocking chain extends the condition (\ref{bp2}) from $k=2$ to $k\geq 2$. It turns that it is  also necessary condition for the stability in the fully  transferable case:
	\begin{prop}\label{t01}
		If a marriage $\{ii^{'}\}$ is a stable one for the FT case  then it is cyclically monotone. 
	\end{prop}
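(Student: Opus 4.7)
The plan is to use the existence of individual ``cuts'' (dual variables) that stability in the FT case forces, and then sum a telescoping chain of inequalities.

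First I would invoke the FT stability hypothesis in its quantitative form: there must exist numbers $u_i$ (the cut of man $i$) and $v_{i'}$ (the cut of woman $i'$) with $u_i + v_{i'} = \theta_{ii'}$ for every matched couple. The absence of a blocking pair, as expressed in (\ref{bp2}), then translates into the family of inequalities
\[
u_i + v_{j'} \geq \theta_{ij'} \quad \text{for every } i \in \I_m \text{ and every } j' \in \I_w,
\]
since otherwise $ij'$ would be a blocking pair. (This is the only content of FT stability that I will actually use; I would note that it is essentially the linear-programming dual of the optimal assignment problem.)

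Next I would fix an arbitrary candidate blocking chain $i_1 i'_1, \ldots, i_k i'_k$ (with the cyclic convention $i'_{k+1} := i'_1$) and apply the pairwise inequality to the ``cross'' pairs $i_l i'_{l+1}$:
\[
u_{i_l} + v_{i'_{l+1}} \geq \theta_{i_l i'_{l+1}}, \qquad l = 1, \ldots, k.
\]
Summing over $l$, the crucial observation is that the cyclic shift $l \mapsto l+1$ leaves $\sum_l v_{i'_l}$ invariant, so $\sum_l v_{i'_{l+1}} = \sum_l v_{i'_l}$. Combining this with $u_{i_l} + v_{i'_l} = \theta_{i_l i'_l}$ for each $l$ gives
\[
\sum_{l=1}^k \theta_{i_l i'_l} = \sum_{l=1}^k \bigl(u_{i_l} + v_{i'_l}\bigr) = \sum_{l=1}^k \bigl(u_{i_l} + v_{i'_{l+1}}\bigr) \geq \sum_{l=1}^k \theta_{i_l i'_{l+1}},
\]
which is exactly the negation of the blocking-chain condition (\ref{chain*}). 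Hence no blocking chain exists, so $\{ii'\}$ is cyclically monotone.

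There is no real obstacle here: the only conceptual step is recognizing that FT stability is equivalent to the existence of the separating potentials $u_i, v_{i'}$ (a standard duality observation for the bipartite assignment problem), after which the proof is the telescoping trick above. If I wanted to be fully self-contained, I would spend a sentence justifying why the cuts $u_i, v_{i'}$ exist under stability — essentially because any unmatched pair that violates $u_i + v_{j'} \geq \theta_{ij'}$ would be blocking by (\ref{bp2}), and any legitimate split of each matched couple's reward defines $u_i, v_{i'}$ on the support — but nothing beyond this bookkeeping is needed.
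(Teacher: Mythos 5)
Your proof is correct and is essentially the paper's own argument: both rest on the identity $\sum_l (u_{i_l}+v_{i'_l})=\sum_l (u_{i_l}+v_{i'_{l+1}})$ under the cyclic shift together with the no-blocking-pair inequality $u_i+v_{j'}\geq\theta_{ij'}$ from (\ref{bp2}). The only cosmetic difference is that the paper runs the argument by contradiction (a blocking chain forces some single cross pair $i_l i'_{l+1}$ to violate (\ref{bp2})), whereas you sum the inequalities directly to negate (\ref{chain*}); the content is identical.
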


	\begin{proof}
		Let $\{ii^{'}\}$ be a matching, such that $u_i$ is the cut of man $i$ marrying $i^{'}$ and $v_{i^{'}}$ the cut of the woman $i^{'}$ marrying $i$. Suppose by negation that  $i_1i^{'}_1\ldots i_ki^{'}_k$ is a blocking  chain. 
		Since $u_i+v_{i^{'}}\leq \theta_{ii^{'}}$ we obtain
		$$\sum_{l=1}^k\theta_{i_li^{'}_{l+1}}> \sum_{l=1}^k\theta_{i_li^{'}_{l}} \geq \sum_{l=1}^k(u_{i_l} +  v_{i^{'}_l})=\sum_{l=1}^k(u_{i_l} +  v_{i^{'}_{l+1} }) $$
		so, in particular, there exists a pair $i_li^{'}_{l+1}$ for which $\theta_{i_li^{'}_{l+1}} > u_{i_l} + v_{i^{'}_{l+1}}$. Hence $i_li^{'}_{l+1}$ is a blocking pair via (\ref{bp2}).
	\end{proof}
	We shall see later on that cyclical monotonicity is, actually, an {\em equivalent definition} to stability in the FT case. 
	
	The notion of cyclical monotonicity implies an additional level of cooperation for the marriage game. Not only the married pair share their utility  between themselves via (\ref{sum}), but also different couples are ready to share their reward via a chain deal according to Definition \ref{chain}. If the total reward after the chain exchange exceeds their reward prior to this deal, the lucky ones are ready to share their reward with the unlucky and  compensate their loss
	
	What about the FNT case? Of course there is no point talking about a "chain deal" in that case. However, we may define a "FNT blocking chain"  $i_1i^{'}_1\ldots i_{k}i^{'}_k$ by 
	\be\label{bpNT}\max_{1\leq l\leq k} \min\{ \theta^m_{i_li^{'}_{l+1}}-\theta^m_{i_li^{'}_l},  \theta^w_{i_li^{'}_{l+1}}-\theta^w_{i_li^{'}_l}\}>0\ee
	where, again, $i^{'}_{k+1}\equiv i^{'}_1$.  Definition \ref{firstd} is analogs to the statement that that there are no blocking chains of this form. Thus, a marriage $\{ii^{'}\}$ is stable in the FNT case if and only if
	\be\label{chain**} \max_{1\leq l\leq k} \min\{ \theta^m_{i_li^{'}_{l+1}}-\theta^m_{i_li^{'}_l},  \theta^w_{i_li^{'}_{l+1}}-\theta^w_{i_li^{'}_l}\}\leq 0\ee
	for any chain deal $i_1i^{'}_1\ldots i_{k}i^{'}_k$.
	
	At the first sight definition (\ref{chain**}) seems redundant, since it provides no further information. However, 
we can observe the  analogy between (\ref{chain*}) and (\ref{chain**}).  In fact, (\ref{chain**}) and  (\ref{chain*}) are obtained from  each other  by the exchanges
	\be\label{dual} \theta_{i_li^{'}_{l+1}}-\theta_{i_li^{'}_l} \Longleftrightarrow \min\{ \theta^m_{i_li^{'}_{l+1}}-\theta^m_{i_li^{'}_l},  \theta^w_{i_li^{'}_{l+1}}-\theta^w_{i_li^{'}_l}\}\ \ \ 
	\text{and} \ \ \  \sum_1^k \Longleftrightarrow \max_{1\leq i\leq k}\ee
	In section \ref{fake} below we will take advantage on this  representation. 
	\section{Partial sharing}
	Here we present two possible definitions of intermediate marriage game which interpolate between the fully transferable and the    non transferable case. 
	The first is based on the notion of core of  a cooperative game, and the second is based on cyclic monotonicity. 
	\subsection{Stable marriage as  a cooperative game}\label{gencase}
	This part follows some of the ideas in Galichon et.all and references therein\footnote{which was turned to my attention by R. McCann} \cite{GASD}.  See also \cite{ECA}.

	Assume that  we can guarantee a  cut  $u_i$  for each married man $i$, and a cut $v_{j^{'}}$ for  each married woman $j^{'}$. In order to define a stable marriage we have to impose some conditions which will guarantee that no man or woman can increase his or her cut by marrying a different partner. For this let us define, for each pair $ij^{'}$, a {\em pairwise bargaining set} ${\cal F}(ij^{'})\subset \R^2$ which contains all possible cuts $(u_i,v_{j^{'}})$ for a matching of man $i$ with woman $j^{'}$.
	\begin{assumption}\label{genass}  \ .
		\begin{description}
			\item{i)}
			For each $i\in \I_m$ and $j^{'}\in \I_w$, ${\cal F}(ij^{'})$ are closed sets in $\R^2$, equal to the closure of their interior. Let  ${\cal F}_0(ij^{'})$ the interior of ${\cal F}(ij^{'})$.
			\item{ii)} ${\cal F}(ij^{'})$ is monotone in the following sense:
			If $(u ,v)\in {\cal F}(ij^{'})$ then $(u^{'}, v^{'})\in {\cal F}(ij^{'})$ whenever $u^{'}\leq u$ and $v^{'}\leq v$.
			\item{iii)} There exist $C_1, C_2\in\R$ such that $$ \left\{(u,v);  \max(u,v)\leq C_2   \right\}\subset {\cal F}(ij^{'})\subset \left\{(u,v); u+v\leq C_1  \right\} \ $$
			for any $i\in\I_m, j\in \I_w$.
		\end{description}
	\end{assumption}
	\par
	The meaning of the feasibility set is as follows:
	\begin{tcolorbox}
		Any married couple  $ij^{'}\in\I_m\times \I_w$ can guarantee the cut $u$ for $i$ and $v$ for $j^{'}$, provided $(u,v)\in {\cal F}(ij^{'})$.
	\end{tcolorbox}

	\begin{defi}\label{deffea} \ .
		The {\em feasibility set} $V({\cal F})\subset \R^{2N}$ is composed of all vectors $(u_1, \ldots u_N, v_1, \ldots v_N)$ which satisfies
		$$ (u_i,v_{j^{'}})\in \R^2-{\cal F}_0(ij^{'})$$
		for any $ij^{'}\in\I_m\times \I_w$. 
		
		The marriage plan $\{ii^{'}\}$ is stable if and only if there exists $(u_1, \ldots v_N)\in V({\cal F})$  such that $(u_i, v_{i^{'}})\in {\cal F}(ii^{'})$ for any $i\in\{1, \ldots N\}$.
	\end{defi}
	
	The FNT case is contained in definition \ref{deffea}, where 
	\be\label{Fblockp} {\cal F}(ij^{'}):= \{u\leq \theta^m_{ij^{'}} ; \ \ v\leq \theta^w_{ij^{'}} \}\ . \ee
	Indeed, if $\{ii^{'}\}$ is a stable marriage plan let $u_i=\theta^m_{ii^{'}}$ and $v_{i^{'}}=\theta^w_{ii^{'}}$. Then $(u_1, \ldots v_N)$ satisfies $(u_i,v_{i^{'}})\in{\cal F}$ for any $i\in\{1\ldots N\}$. Since there are no blocking pairs if follows that for any $j^{'}\not= i^{'}$, either $\theta^m_{ij^{'}}>\theta_{ii^{'}}=u_i$ or $\theta^w_{ij^{'}}>\theta^w_{jj^{'}}=v_{j^{'}}$, hence 
	$(u_i, v_{j^{'}})\in\R^2-{\cal F}_0(ij^{'})$ so $(u_1\ldots v_N)\in V({\cal F})$ (Fig. [1a]).

	The FT case  (Fig. [1b]) is obtained by
	\be\label{stfultrans} {\cal F}(ij^{'}):= \{(u,v); \ u+v\leq \theta_{ij^{'}} \} \ . \ee
	Indeed, if $\{ii^{'}\}$ is a stable marriage plan and $(u_1, \ldots v_N)$ are the corresponding cuts satisfying $u_i+v_{j^{'}}=\theta_{ij^{'}}$, then for each $j^{'}\not= i^{'}$ we obtain 
	$u_i+v_{j^{'}}\geq \theta_{ij^{'}}$ (otherwise $ij^{'}$ is a blocking pair). This implies that 
	$(u_i,v_{j^{'}})\in \R^2-{\cal F}_0(ij^{'})$.

\begin{figure}
	\begin{subfigmatrix}{4}
			\subfigure[FNT]{\includegraphics[width=0.44\textwidth]{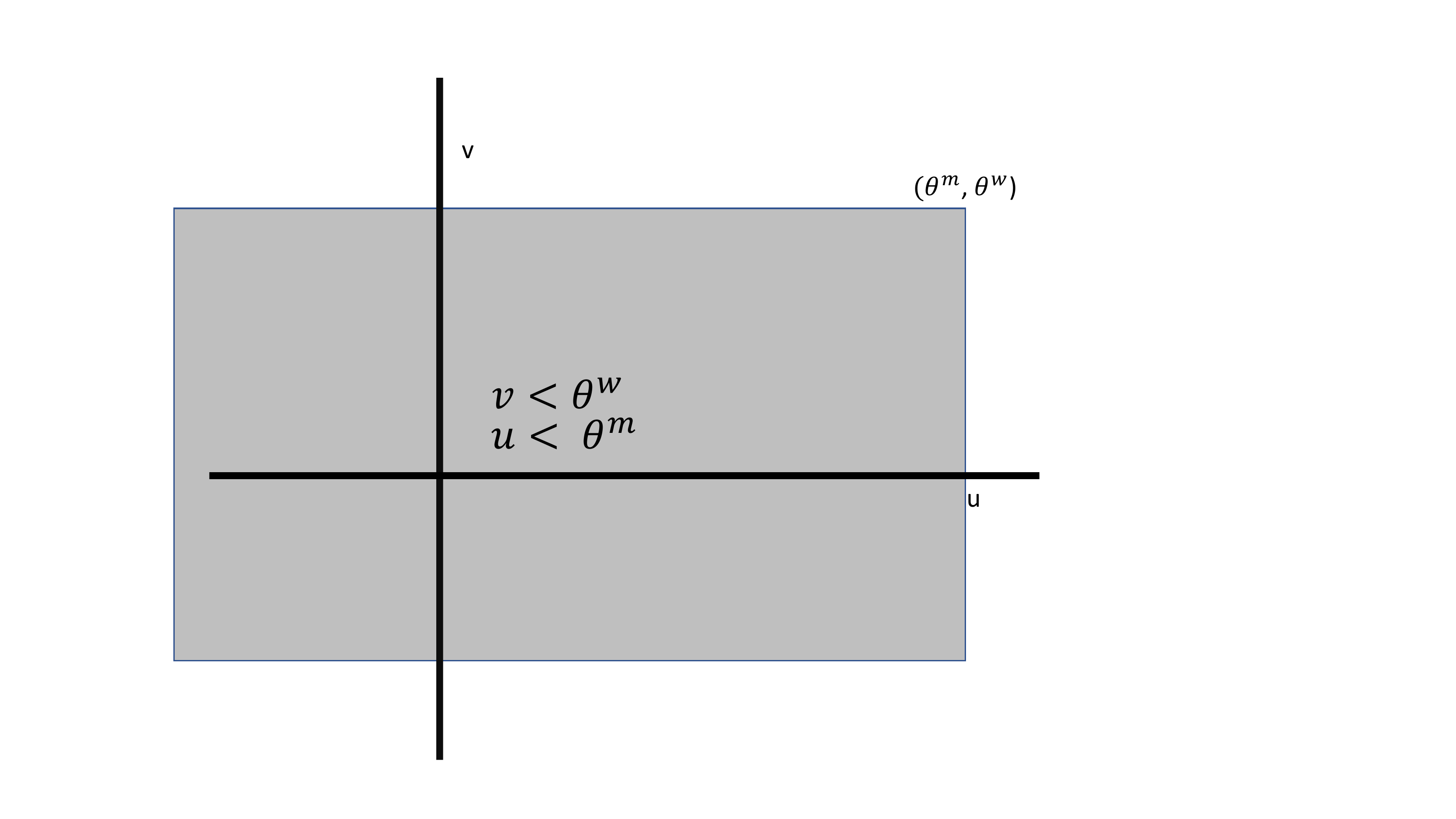}}
		\subfigure[FT]{\includegraphics[width=0.44\textwidth]{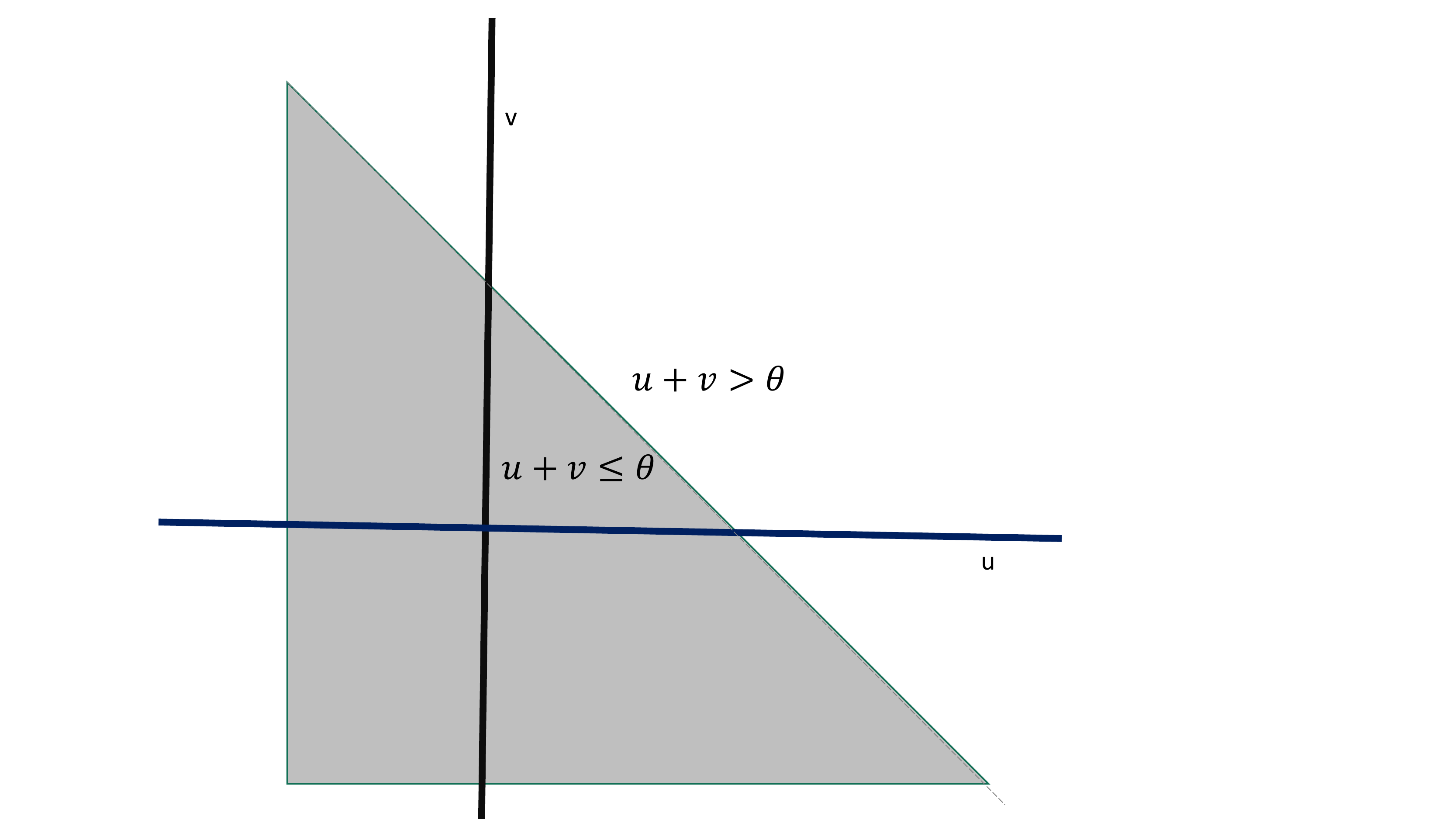}}
	
		\subfigure[Case 1]{\includegraphics[width=0.44\textwidth]{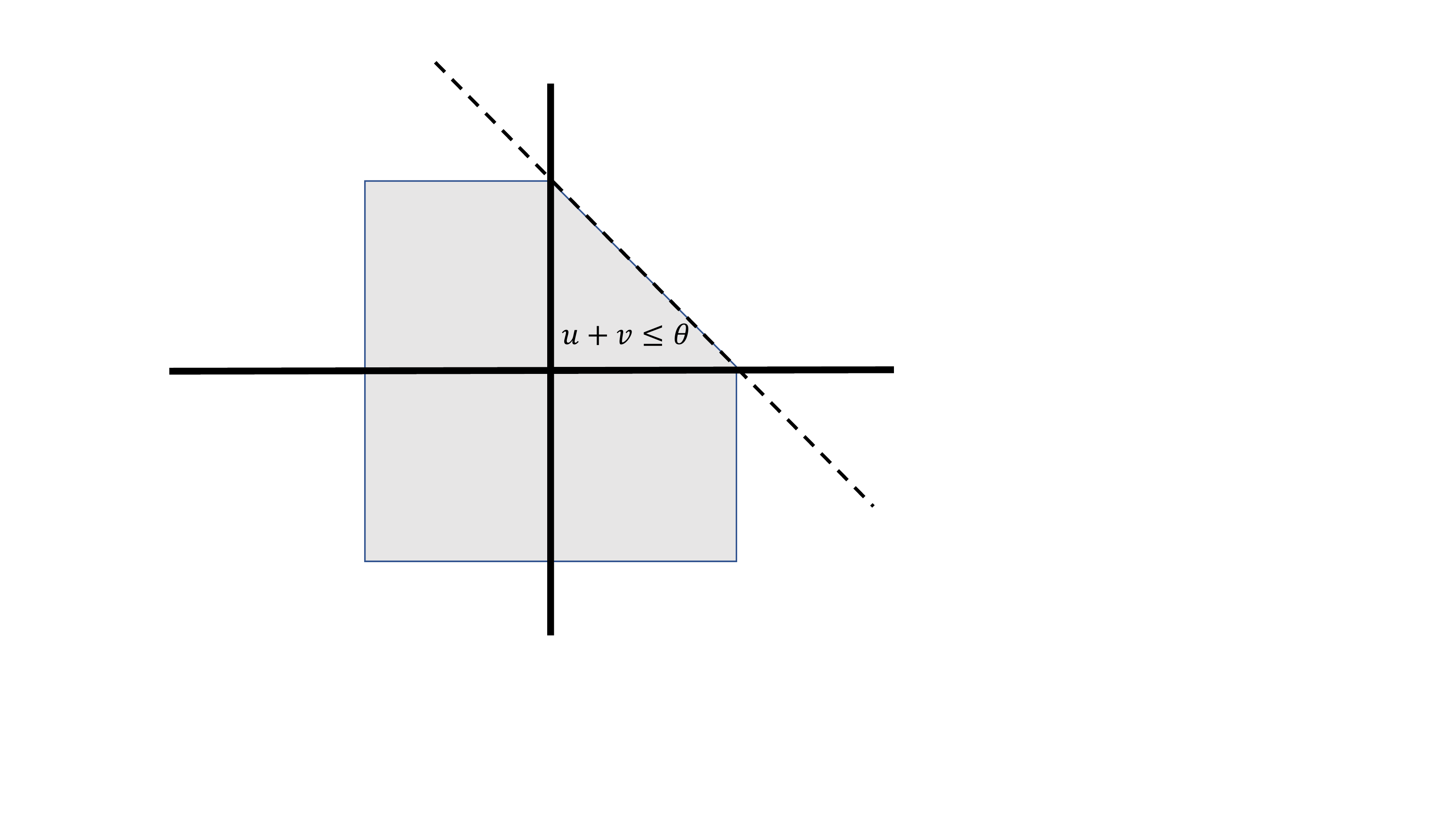}}
			\subfigure[Case 2]{\includegraphics[width=0.44\textwidth]{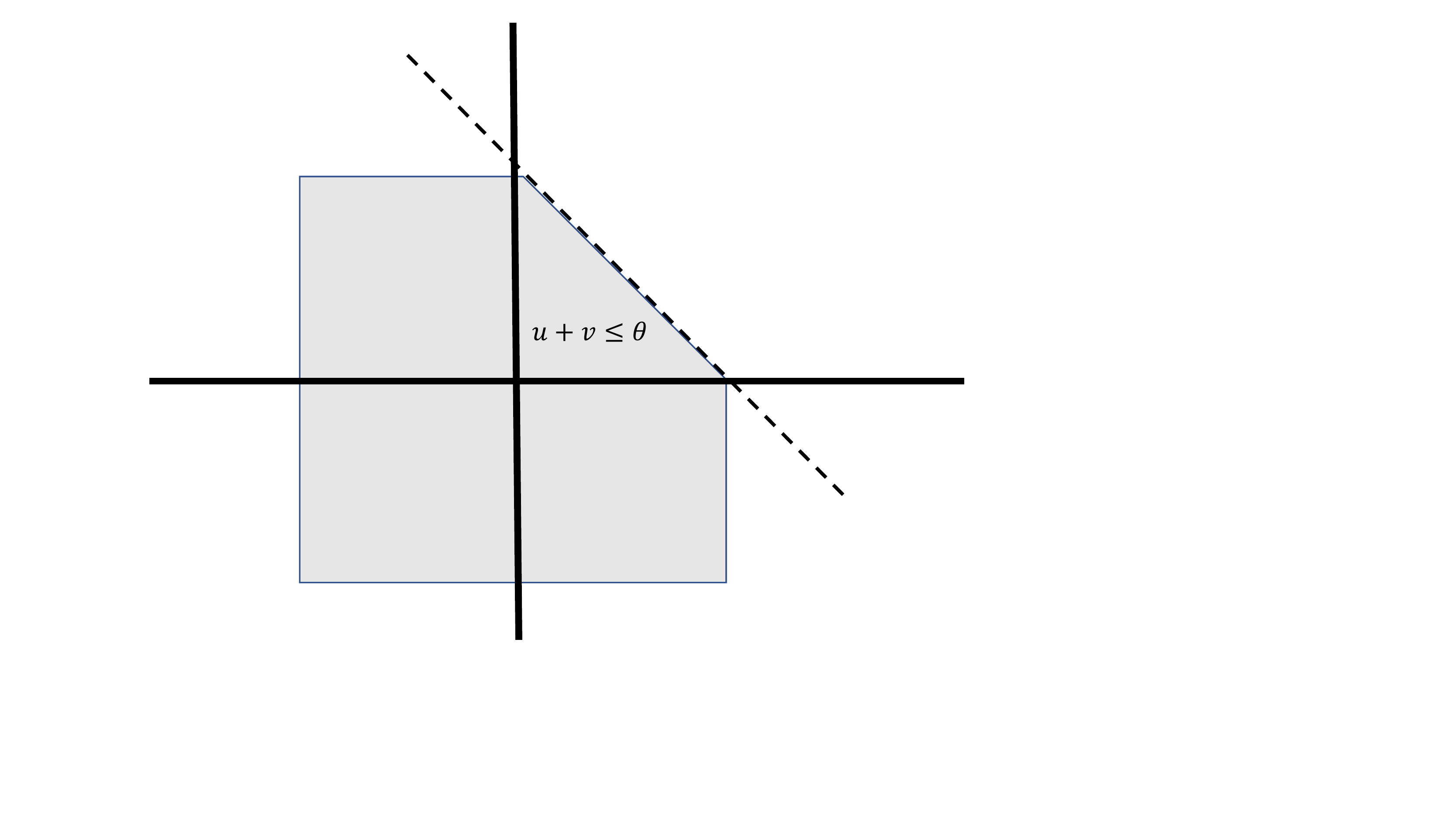}}
		\label{fig1}
	\end{subfigmatrix}
		 \caption{pairwise bargaining sets}
\end{figure}

	There are other, sensible models of  {\em partial transfers} which fit into the formalism of Definition \ref{deffea} and Theorem \ref{cornoemp}.  Let us consider several examples:
	\begin{enumerate}
		\item {\em Transferable marriages restricted to non-negative cuts} : In the transferable case the feasibility sets may contain negative cuts for the man $u$ or for the woman $v$ (even though not for both, if it is assumed $\theta_{ij^{'}}>0$). To avoid the undesired stable marriages were one of the partners get a negative cut we may replace the feasibility set (\ref{stfultrans}) by
		$$ {\cal F}(ij^{'}):=\{(u,v)\in\R^2; u+v\leq \theta_{ij^{'}}\ ,  \max(u,v)\leq \theta_{ij^{'}}\} \ , $$
		see Fig. [1c]. 
		It can be easily verified that if $(u_1, \ldots v_N)\in V({\cal F})$ contains negative components, then $([u_1]_+, \ldots [v_N]_+)$, obtained by replacing the negative components by $0$, is in $V({\cal F})$ as well. Thus, the core of this game contains vectors in $V({\cal F})$ of non-negative elements.
		\item In the transferable case (\ref{stfultrans}) we allowed both men and women to transfer money to their partner. Indeed, we assumed that the man's $i$ cut is $\theta^m_{ij^{'}}-w$ and the woman's $j$ cut is  $\theta^w_{ij^{'}}+w$, where $w\in\R$. Suppose we wish to allow only transfer between men to women, so we insists on $w\geq0$.
		In that case we choose (Fig. [1d])
		\be\label{calFFT} {\cal F}(ij^{'}):=\{ (u,v)\in\R^2; \ u+v\leq \theta_{ij^{'}}; \ \ u\leq \theta^m_{ij^{'}} \}  \  .  \ee
		\item
		Let us assume that the transfer $w$ from man $i$ to woman $j^{'}$ is taxed, and the tax depends on $i,j^{'}$. Thus, if man $i$ transfers $w>0$ to a woman $j^{'}$  he reduces his cut by $w$, but the woman cut is increased by an amount $\beta_{i,j}w$, were $\beta_{i,j}\in [0,1]$. Here  $1-\beta_{i,j}$ is the tax implied for this transfer. It follows that
		$$ u_i\leq \theta^m_{ij^{'}}-w \ \ \ ; \ \ \ v_{j^{'}}\leq \theta^w_{ij^{'}}+\beta_{i,j}w \ , \ \ w\geq 0 $$
		Hence
		$$ {\cal F}(ij^{'}):=\{(u,v)\in\R^2; \ u_i + \beta^{-1}_{i,j}v_{j^{'}}\leq  \theta^\beta_{ij^{'}} , \ \ u_i\leq \theta^m_{ij^{'}}\} \ ,  $$
		where $\theta^\beta_{ij^{'}}:= \theta^m_{ij^{'}} + \beta^{-1}_{i,j}\theta^w_{ij^{'}}$. The geometrical description of ${\cal F}$ us as in Fig. [1d], where the dashed line is tilted. 
	\end{enumerate}
	\subsection{Stability by fake promises} \label{fake} 

Suppose a man can make a promise to a married woman (which is not his wife), and v.v. The principle behind it is that each of them does not intend to honor his/her own promise, but, nevertheless, believes that the other parti will honor hers/his.
It is also based on both partial sharing inside a married pair, as well as some collaboration between the pairs.
\par
Define 
\be\label{introq}  \Delta^{(q)}(i,j^{'}):=\min\left\{ \begin{array}{c}
	q(\theta^m_{ij^{'}}-\theta^m_{ii^{'}})+\theta^w_{ij^{'}}-\theta^w_{jj^{'}} \\
	q(\theta^w_{ij^{'}}-\theta^w_{jj^{'}}+\theta^m_{ij^{'}}-\theta^m_{ii^{'}}
\end{array}\right\} \ ,
\ee
where $0\leq q\leq 1$. In particular
$$
\Delta^{(0)}(i,j^{'}):=\min\{ \theta^m_{ij^{'}}-\theta^m_{ii^{'}}, \theta^w_{ij^{'}}-\theta^w_{j^{'}j^{'}}\} $$ 
$$ \Delta^{(1)}(i,j^{'}):=
\theta^m_{ij^{'}}-\theta^m_{ii^{'}}+ 
\theta^w_{ij^{'}}-\theta^w_{ii^{'}}\equiv
\theta_{ij^{'}}-\theta_{ii^{'}}
. $$
The value of $q$ represents the level of {\em internal sharing} inside the couple. Thus, $q=0$ means there is no sharing whatsoever, and the condition  $\Delta^{(0)}(i,j^{'})>0$ for a blocking pairs implies that both $i$ and $j^{'}$ gains from the exchange, is displayed in  (\ref{bpNT}) . 

On the other hand, $\Delta^{(1)}(i, j^{'})+\Delta^{(1)}(j,i^{'})>0$, namely 
$$\theta_{ii^{'}}+ \theta_{jj^{'}}< \theta_{ij^{'}}+\theta_{ji^{'}}$$
is, as we argued,  a necessary condition for a blocking pair in FT case, where $\theta$ represents the sum of the rewards to of the pair via (\ref{sum}). 

We now consider an additional parameter $p\in[0,1]$ and define the real valued function on $\R$:
\be\label{pospdef} x \mapsto [x]_p:= [x]_+ - p[x]_- \  \ee
Note that $[x]_p=x$  for any $p$ if $x\geq 0$, while $[x]_1=x$ for any real $x$. The parameter $p$ represents the level of sharing {\em between the pairs}. 
\begin{defi}\label{defshapleyq} Let \ $0\leq p,q\leq 1$. The matching $\{ii^{'}\}$  is $(p,q)-$stable if for any $k\in\N$ and  $i_1, i_2, \ldots i_k\in \{1, \ldots N\}$
	$$ \sum_{l=1}^k\left[ \Delta^{(q)}(i_l, i^{'}_{l+1})\right]_p  \leq 0 \  \text{where} \ \ i_{k+1}=i_1 \  $$
	where $i_{k+1}:= i_1$.
\end{defi}
Note that  $p=0$  implies that $\Delta^{(q)}(i, j^{'})\leq 0$ for {\em any} $j^{'}\not= i^{'}$. If, in addition, $q=0$ then this is just the statement that there are no blocking pairs in the FNT case. 

On the other hand, $p=1$ implies
$$ \sum_{l=1}^k \Delta^{(q)}(i_l, i^{'}_{l+1})  \leq 0 \  \text{where} \ \  $$
which is reduced to (\ref{chain*}) if $q=1$ as well. 

\par
Let us interpret the meaning of $q,p$ in the context of utility exchange. 
A man $i\in\I_m$ can offer  some bribe $w$   to any  other women $j^{'}$ he might  be  interested in (except his own wife, so $j^{'}\not=i^{'}$). His cut for marrying $j^{'}$ is now
$\theta^m_{ij^{'}}-w$. The cut of the woman $j^{'}$ should have been $\theta^w_{ij^{'}}+w$. However, the happy woman have to pay some tax for accepting this bribe. Let $q\in[0,1]$ be  the fraction of the bribe she can get (after paying her tax). Her supposed cut for marrying $i$ is just $\theta^w_{ij^{'}}+qw$. Woman $j^{'}$ will believe and accept offer from man $i$ if two conditions are satisfied: the offer should be both
\begin{enumerate}
	\item {\em Competitive},   namely
	$\theta^w_{ij^{'}}+qw\geq \theta^w_{j^{'}j^{'}}$.
	\item {\em Trusted}, if woman $j^{'}$  believes that  man $i$ is motivated. This implies $\theta^m_{ij^{'}}-w\geq \theta^m_{ii^{'}}$.
\end{enumerate}
The two conditions above can be satisfied, and  the offer is {\em acceptable},  only if
\be\label{profitman}  q(\theta^m_{ij^{'}}-\theta^m_{ii^{'}})+\theta^w_{ij^{'}}-\theta^w_{jj^{'}}>0  \  . \ee
Symmetrically, man $i$ will accept an offer from a woman $j^{'}\not=i^{'}$ only if
\be\label{profitwoman}  q(\theta^w_{ij^{'}}-\theta^w_{ii^{'}})+\theta^m_{ij^{'}}-\theta^m_{jj^{'}}> 0  \  . \ee
The {\em utility} of the exchange $ii^{'}$ to $ij^{'}$ is, then defined by the minimum  $\Delta^{(q)}(i, j^{'})$ of (\ref{profitman}, \ref{profitwoman}) via  (\ref{introq}).

\par

To understand the role of $p$, consider 
the chain of pairs exchanges
\begin{quote}
	$(i_1i^{'}_1\rightarrow i_1i^{'}_2),  \ldots  (i_{k-1}i^{'}_{k-1}\rightarrow i_{k-1} i^{'}_k), (i_ki^{'}_k)\rightarrow (i_ki^{'}_1)$ \ . 
\end{quote}

Each of the  pair exchange  $(i_l,i_l)\rightarrow (i_l,i_{l+1})$ yields a utility \\  $\Delta^{(q)}(i_l, i^{'}_{l+1})$ for  the new pair.
The lucky new pairs in this chain of couples exchange are those  who  makes a positive reward.
The unfortunate new pairs are those who suffer a loss (negative reward). 
The lucky pairs, whose interest is to activate this chain, are ready to compensate the unfortunate ones by contributing some of their gained utility.
The chain will be activated (and the original marriages will break down) if the mutual contribution of the fortunate pairs is enough to cover
{\it at least} the $p-$ part of the mutually   loss of utility
of the unfortunate pairs. This is the condition
$$ \sum_{\Delta^{(q)}(i_l, i^{'}_{l+1})>0}  \Delta^{(q)}(i_l, i^{'}_{l+1}) + p\sum_{\Delta^{(q)}(i_l, i^{'}_{l+1})<0} \Delta^{(q)}(i_l, i^{'}_{l+1})
\equiv  \sum_{l=1}^k\left[ \Delta^{(q)}(i_l, i^{'}_{l+1})\right]_p > 0 \   \ . $$
\vskip .3in
\begin{center}\begin{Ovalbox} {\it
			Stability by Definition \ref{defshapleyq}   grantees that no such chain is activated.
}\end{Ovalbox}\end{center}

	\section{Existence of stable marriage plans}
	In the general case of Assumption \ref{genass}, the existence of a stable matching follows from the following Theorem:
	\begin{theorem}\label{cornoemp}
		Let $W({\cal F})\subset \R^{2N}$ defined as follows:
		\\
		$(u_1, \ldots u_N, v_1, \ldots v_N)\in W({\cal F})$ , 
		
		$ \Leftrightarrow \exists$  an injection $ \tau:\I_m\rightarrow\I_w$  such that $(u_i, v_{i^{'}})\in {\cal F}(ii^{'})$ where $i^{'}=\tau(i)$ ,  $\ \forall \ i\in \I_m  \ . $
		Then there exists $(u_1, \ldots u_N, v_1, \ldots v_N)\in W({\cal F})$  such that
		\be\label{mercore} (u_i,v_{j^{'}})\in \R^2- {\cal F}_0(ij^{'}) \  \ee
		for any $(i,j^{'})\in\I_m\times \I_w$.
	\end{theorem}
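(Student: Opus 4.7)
My plan is to reduce the existence statement to a fixed-point problem on a compact subset of $\R^N$ by encoding each bargaining set ${\cal F}(ij')$ by a scalar frontier function, and then to apply a Kakutani-type argument after a smoothing perturbation.

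First, using Assumption~\ref{genass}(ii), for each pair $(i,j')$ I define the upper frontier
\[
\phi_{ij'}(u):=\sup\{v\in\R:(u,v)\in{\cal F}(ij')\},
\]
so that ${\cal F}(ij')$ coincides with the hypograph $\{(u,v):v\le\phi_{ij'}(u)\}$. Monotonicity forces $\phi_{ij'}$ to be nonincreasing; Assumption~\ref{genass}(iii) gives $C_2\le\phi_{ij'}(u)\le C_1-u$ on the relevant range; and the ``closure of interior'' part of (i) translates $(u,v)\in\R^2-{\cal F}_0(ij')$ into $v\ge\phi_{ij'}(u)$ (using the appropriate semicontinuous representative at jumps). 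Setting $v_{j'}(u):=\max_k\phi_{kj'}(u_k)$ makes the no-blocking conditions $(u_i,v_{j'})\notin{\cal F}_0(ij')$ automatic, so the theorem reduces to finding $u\in\R^N$ and an injection $\tau:\I_m\to\I_w$ such that $\phi_{i\tau(i)}(u_i)=\max_k\phi_{k\tau(i)}(u_k)$ for every $i$; equivalently, the bipartite ``best-response'' graph $G(u)$ whose edges are the argmax pairs must admit a perfect matching.

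Assumption~\ref{genass}(iii) also confines the search for such a $u$ to a compact box $B\subset\R^N$. On $B$ I would set up a best-response correspondence $T:u\mapsto T(u)\subset B$ encoding the tatonnement ``lower $u_i$ on any Hall-deficient set of men and raise it where a woman has no claimant'', and verify that $T$ is upper hemicontinuous with nonempty, convex, compact values (this uses monotonicity and semicontinuity of the $\phi_{ij'}$). Kakutani's theorem then yields a fixed point $u^*$; at such a point Hall's condition holds for $G(u^*)$, producing a perfect matching $\tau$, and the pair $(u^*,v(u^*))$ together with $\tau$ lies in $W({\cal F})$ and satisfies \eqref{mercore}.

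The hard part is that $\phi_{ij'}$ is typically discontinuous (in the NTU case it is a step function), which obstructs the direct verification of Kakutani's hypotheses and creates ties along flat pieces of the frontier. The cleanest way to handle this is a double approximation: replace each ${\cal F}(ij')$ by strictly monotone, strictly concave inner approximations ${\cal F}^\epsilon(ij')\subsetneq{\cal F}(ij')$ whose frontiers $\phi^\epsilon_{ij'}$ are continuous and strictly decreasing, run the Kakutani argument for the perturbed data to produce $(u^\epsilon,v^\epsilon,\tau^\epsilon)$, and pass to a subsequential limit $\epsilon\to 0$. Compactness from (iii) and the finiteness of the assignment space let $\tau^\epsilon$ stabilize along a subsequence, while the closedness of each ${\cal F}(ij')$ from (i) delivers the limit in $W({\cal F})$ still satisfying the no-blocking property \eqref{mercore}.
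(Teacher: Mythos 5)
Your opening reduction is sound and is genuinely useful: encoding each ${\cal F}(ij^{'})$ by its nonincreasing, upper semicontinuous frontier $\phi_{ij^{'}}$, and setting $v_{j^{'}}:=\max_k\phi_{kj^{'}}(u_k)$, does make \eqref{mercore} automatic and correctly reduces the theorem to finding $u\in\R^N$ for which the argmax bipartite graph admits a perfect matching. Note, however, that the paper does not prove Theorem \ref{cornoemp} by any such direct construction: it observes that the marriage game is a balanced NTU game and invokes Scarf's theorem on the non-emptiness of the core. So you are attempting a genuinely different, self-contained route --- which would be valuable if it closed, but it does not as written.

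The two places where all of the difficulty of Scarf's theorem actually lives are asserted rather than proved. You never define the correspondence $T$: a rule of the form ``lower $u_i$ on any Hall-deficient set of men'' is indexed by a combinatorial family of subsets of $\I_m$, and there is no reason the resulting values are convex (Kakutani's hypothesis), nor that a fixed point of whatever convexification you adopt forces Hall's condition on $G(u^*)$ rather than merely balancing some average of deficiencies. This step is precisely the content of Scarf's combinatorial lemma and of the Demange--Gale and Alkan--Gale existence arguments, which proceed by pivoting or by discrete auction algorithms, not by a direct Kakutani argument; as written, the middle of your proof restates the theorem. The approximation step also fails as described. Assumption \ref{genass} does not require ${\cal F}(ij^{'})$ to be convex, yet a strictly concave frontier makes each ${\cal F}^\epsilon(ij^{'})$ convex; if the ${\cal F}^\epsilon_0(ij^{'})$ are to exhaust ${\cal F}_0(ij^{'})$ (which your limiting argument for \eqref{mercore} needs), then ${\cal F}(ij^{'})$ would contain the convex hull of ${\cal F}_0(ij^{'})$ and, being the closure of its interior, would itself be convex --- a property not assumed. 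Even in the convex FNT case ${\cal F}(ij^{'})=\{u\le\theta^m_{ij^{'}}\}\cap\{v\le\theta^w_{ij^{'}}\}$, no real-valued continuous strictly decreasing $\phi^\epsilon_{ij^{'}}$ can satisfy $\{v\le\phi^\epsilon_{ij^{'}}(u)\}\subset{\cal F}(ij^{'})$, because the quadrant has a vertical edge: points with $u>\theta^m_{ij^{'}}$ and $v$ very negative lie under any such frontier but outside ${\cal F}(ij^{'})$. To salvage this you would have to drop concavity, truncate everything to the compact box supplied by Assumption \ref{genass}(iii), and re-verify the exhaustion of ${\cal F}_0$ there; but the central fixed-point step would still be missing. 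The cleanest repairs are either to verify balancedness and quote Scarf, as the paper does, or to run a discrete deferred-acceptance/auction argument on the perturbed frontiers.
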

	The set of vectors in $W({\cal F})$ satisfying (\ref{mercore}) is called {\em the core}. Note that the core is identified with the set of $\R^{2N}$ vector in $V({\cal F})$ which satisfy the condition $(u_i, v_{i^{'}})\in{\cal F}(ii^{'})$. Hence Definition \ref{deffea} can be recognized as the non-emptiness of the core, which is equivalent to the existence of a stable matching.
	\par
	Theorem \ref{cornoemp} is, in fact, a special case the celebrated Theorem of Scarf   \cite{Sc}  for  cooperative games, tailored to the marriage scenario  (see also \cite{Gi, bon}). As we saw, it can be applied to the fully non-transferable case (\ref{Fblockp}), as well as to the fully transferable case (\ref{stfultrans}).
	
	Theorem \ref{cornoemp} implies, in particular, the existence of stable marriage in the FNT case corresponding to $p=q=0$ or (\ref{Fblockp}), as well as for the FT case corresponding to $p=q=1$ or 
	(\ref{stfultrans}). 
	
	\subsection{Gale-Shapley algorithm in the non-transferable case} 
	
	Here we describe the celebrated, constructive algorithm due to Gale and Shapley \cite{GSH}, is describe below:
	\begin{enumerate}
		\item At the first stage, each man $i\in\I_m$ proposes to the woman $j\in\I_w$ at the top of his list.
		At the end of this stage, some women got proposals (possibly more than one), other women may not get any proposal.
		\item At the second stage, each woman who got more than one proposal, bind the man whose proposal is most preferable according to her list (who is now engaged). She releases all the other men who proposed. At the end of this stage, the men's set $\I_m$ is composed of two parts: engaged and released.
		\item At the next stage each {\em released} man makes a proposal to the {\em next} woman in his preference list (whenever she is engaged or not).
		\item Back to stage 2.
	\end{enumerate}
	It is easy to verify that this process must end at a finite number of steps. At the end of this process all women and men are engaged. This is a stable matching!
	\par
	Of course, we could reverse the role of men and women in this algorithm. In both cases we get a stable matching. The algorithm we indicated is the one which is best from the men's point of view. In the case where the women propose, the result is best for the women. In fact
	\begin{theorem}\cite{MO}
		For any NT stable matching $\{ii^{'}\}$, the rank of the woman $i^{'}$ according to man $i$ is {\em at most} the rank of the woman matched to $i$ by the
		above, men proposing algorithm.
	\end{theorem}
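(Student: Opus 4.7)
The plan is to establish the classical man-optimality of the men-proposing Gale--Shapley algorithm via the standard invariant that no man is ever rejected by a woman he could have been paired with under some stable matching. I would call a woman $j^{'}$ an \emph{achievable partner} of man $i$ if there exists some NT stable matching pairing $i$ with $j^{'}$, so that the theorem reduces to showing that the woman $\sigma(i)$ assigned to $i$ by the algorithm is $\succeq_i$-preferred to every achievable partner of $i$.

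The central ingredient I would prove is the following invariant, by induction on the stages of the algorithm: no man $i$ is ever rejected by any of his achievable partners. The base case before any proposal is vacuous. For the inductive step I would suppose the invariant holds up to some stage, and assume toward a contradiction that at the next stage woman $j^{'}$ rejects man $i$ although $j^{'}$ is an achievable partner of $i$. Rejection by $j^{'}$ means she retains an engagement with some $k$ satisfying $k \succ_{j^{'}} i$. Let $\mu$ be a stable matching witnessing achievability, so $\mu(i)=j^{'}$, and set $l^{'}:=\mu(k)$. The crux is to deduce $j^{'}\succ_k l^{'}$. Since men propose strictly down their preference lists and $k$ is now holding $j^{'}$, any woman $k$ prefers to $j^{'}$ must already have been proposed to and must have rejected him. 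If $l^{'}$ were in this set, then $k$ would have been rejected by $l^{'}$, an achievable partner of $k$ via $\mu$, contradicting the inductive hypothesis. Hence $j^{'}\succ_k l^{'}$, and combined with $k\succ_{j^{'}} i$ this exhibits $(k, j^{'})$ as a blocking pair for $\mu$, contradicting its stability.

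Once the invariant is established, the theorem follows quickly. When the algorithm terminates each man is matched to the top of his preference list restricted to women who have not yet rejected him; by the invariant all his achievable partners remain in that restricted list, so $\sigma(i)$ is $\succeq_i$-above every achievable partner of $i$. Interpreting this through the paper's rank convention yields the stated inequality.

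The main obstacle I anticipate is the careful bookkeeping of the ``currently engaged'' versus ``previously proposed-to-and-rejected'' states in the inductive step, especially because a man's engagement may later be broken when a more preferred suitor displaces him. Making precise that, if $l^{'}\succ_k j^{'}$, then $k$ must already have been definitively rejected by $l^{'}$ (and not merely temporarily engaged and then freed to propose further) is the subtle point, but it is exactly what the proposal order of the algorithm guarantees. After that, translating the invariant into the desired rank-comparison statement is routine.
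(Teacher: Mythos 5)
Your proof is correct: it is the classical man-optimality argument for the men-proposing Gale--Shapley algorithm (the invariant that no man is ever rejected by an \emph{achievable} partner, with the blocking-pair contradiction $(k,j^{'})$ against the witnessing stable matching $\mu$), and the inductive step is handled properly, including the subtlety that ``rejection'' covers both immediate refusal and later displacement. The paper itself offers no proof of this statement --- it is quoted with a citation --- so there is nothing to compare your route against; note only that the theorem's phrasing of ``rank'' must be read so that a more preferred woman has the \emph{larger} rank, since your argument shows the algorithm's output is weakly $\succeq_i$-preferred to $i^{'}$ for every stable $\{ii^{'}\}$, and this requires strict preference lists, as the paper implicitly assumes.
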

	\subsection{Variational formulation in the fully transferable case}
	There are several equivalent definitions of stable marriage plan in the FT case. Here we introduces two of these. 
	
	Recall that if ${\cal F}$ is given by (\ref{calFFT}) the feasibility set $V({\cal F})$ (Definition \ref{deffea}) takes the form
	\be\label{Feass}V({\cal F}):=\{(u_1, \ldots v_N)\in\R^{2N}; \ \ \ u_i+v_{j^{'}}\geq \theta_{ij^{'}} \ \ \forall ij^{'}\in\I_m\times\I_w \}. \ee
	Recall also Definition \ref{chain} for cyclical monotonicity. 
	\begin{theorem}\label{matchingtrans}  $\{ii^{'}\}$ is a stable marriage  plan in the FT case  if and only if one of the following equivalent conditions  is satisfied:
		\begin{itemize}
			\item Efficiency (or {\em maximal public utility}):   $\sum_{i=1}^N \theta_{ii^{'}}\geq \sum_{i=1}^N \theta_{i\sigma(i)}$ for any  marriage plans $\sigma:\I_m\rightarrow \I_w$.
			\item $\{ii^{'}\}$ is cyclically monotone. 
			\item Optimality: The minimal sum $\sum_1^N u^0_i+v^0_i$ of  cuts in the feasibility set (\ref{Feass})  satisfies $u^0_i+v^0_{i^{'}}=\theta_{ii^{'}}$ (i.e $\{ u_1^0, \ldots v_N^0\}$ is in the core). 
		\end{itemize}
	\end{theorem}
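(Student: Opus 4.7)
The plan is to prove the cycle of implications
\begin{equation*}
\text{stability} \Longrightarrow \text{cyclic monotonicity} \Longrightarrow \text{efficiency} \Longrightarrow \text{optimality} \Longrightarrow \text{stability},
\end{equation*}
which yields all three equivalences at once. The first arrow is already Proposition \ref{t01}, so only the remaining three need to be established. The last one, as I will argue, is essentially immediate, so the real work lies in the two middle arrows.

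For \emph{cyclic monotonicity $\Rightarrow$ efficiency} I argue contrapositively. Let $\sigma:\I_m\to\I_w$ be an alternative marriage plan and, after identifying $\I_w$ with $\I_m$ via the matching $\{ii'\}$, view $\sigma$ as a permutation of $\{1,\dots,N\}$. Decompose $\sigma$ into disjoint cycles $(j_1,j_2,\dots,j_k)$. The net contribution of such a cycle to the difference $\sum_i \theta_{i\sigma(i)}-\sum_i \theta_{ii'}$ equals $\sum_{l=1}^k(\theta_{j_l j'_{l+1}}-\theta_{j_l j'_l})$ with $j_{k+1}=j_1$. If the total difference is strictly positive, then at least one cycle contributes positively, and this cycle is precisely a blocking chain in the sense of Definition \ref{chain}, contradicting cyclic monotonicity.

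For \emph{efficiency $\Leftrightarrow$ optimality} I would invoke linear-programming duality for the assignment problem. The primal
\[
\max \sum_{i,j'}\theta_{ij'}\pi_{ij'},\qquad \pi_{ij'}\geq 0,\ \sum_{j'}\pi_{ij'}=1,\ \sum_i \pi_{ij'}=1,
\]
admits, by the Birkhoff--von Neumann theorem, an extreme-point optimum which is a permutation matrix; its value is therefore $\max_\sigma \sum_i \theta_{i\sigma(i)}$. The LP dual is exactly the minimization of $\sum_i (u_i+v_i)$ subject to $(u,v)\in V(\mathcal F)$ as defined in (\ref{Feass}). Strong duality equates the two optimal values, while complementary slackness asserts that any dual optimum $(u^0,v^0)$ paired with a primal optimum supported on $\{ii'\}$ must satisfy $u_i^0+v_{i'}^0=\theta_{ii'}$ for every $i$. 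Thus the marriage $\{ii'\}$ attains the primal maximum (efficiency) if and only if it can be matched with a dual minimizer $(u^0,v^0)\in V(\mathcal F)$ satisfying $u_i^0+v_{i'}^0=\theta_{ii'}$ (optimality).

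Finally, \emph{optimality $\Rightarrow$ stability} is direct: the minimizer $(u^0,v^0)$ supplies cuts whose sums $u_i^0+v_{i'}^0$ coincide with $\theta_{ii'}$ on the matched pairs, while feasibility $u_i^0+v_{j'}^0\geq \theta_{ij'}$ for every $(i,j')\in\I_m\times\I_w$ precisely excludes the blocking inequality (\ref{bp2}). I expect the principal technical point to be the appeal to Birkhoff--von Neumann integrality inside the LP step: without the extreme-point structure of the bipartite assignment polytope, one only controls the continuous relaxation and cannot conclude that the optimal value is attained by an actual permutation.
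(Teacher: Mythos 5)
Your proof is correct, and the cycle of implications you close is the same one the paper closes, but two of your three arrows are argued by genuinely different means. For cyclic monotonicity $\Rightarrow$ efficiency, you decompose an arbitrary competing permutation into disjoint cycles and observe that a positive total gain forces some cycle to be a blocking chain; the paper instead follows Afriat--Brezis and defines a potential $-u^0_i$ as an infimum over all chains ending at $i$, uses cyclic monotonicity only to show this infimum is finite, and then derives the key inequality $-u^0_i+\theta_{ii^{'}}-\theta_{ij^{'}}\geq -u^0_j$, which is summed over the permutation. Your cycle-decomposition step is more elementary and transparent for that single implication, but the paper's construction earns its keep at the next stage: the same potentials $u^0_i$, together with $v^0_{j^{'}}:=\theta_{jj^{'}}-u^0_j$, are an explicit feasible point of $V({\cal F})$ that is tight on the matched pairs, so stability and optimality drop out by weak duality alone (any feasible vector dominates $\sum_i\theta_{i\sigma(i)}$ for every $\sigma$, and the constructed one achieves $\sum_i\theta_{ii^{'}}$). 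You instead outsource the existence of this dual certificate to strong LP duality plus Birkhoff--von Neumann integrality and complementary slackness; as you correctly note, the integrality of the assignment polytope is genuinely needed in your route to identify the LP value with the best permutation value, whereas the paper's hand-built potentials bypass both strong duality and Birkhoff--von Neumann entirely. In exchange, your route gives the equivalence of efficiency and optimality in one clean stroke, while the paper obtains optimality only as a final remark. Both arguments are complete; yours is more modular but leans on heavier black boxes, the paper's is self-contained and constructive.
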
 
The efficiency characterization of stable marriage connects this notion with optimal transport and the celebrated {\em Monge Kantorovich} theory \cite{Mo, Ka, Vil}.  See also \cite{Gal}.

	Since the set of all bijections is finite and the maximum on a finite set is always achieved we obtain from the efficiency characterization. 
		\begin{cor}
		There always exists a stable marriage plan in the FT case. 
	\end{cor}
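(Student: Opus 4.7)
The plan is to invoke the efficiency characterization from Theorem \ref{matchingtrans}, which identifies stable FT marriage plans with maximizers of the total utility functional $\sigma \mapsto \sum_{i=1}^N \theta_{i\sigma(i)}$ over the set of bijections $\sigma: \I_m \to \I_w$. Once that equivalence is granted, existence becomes a trivial compactness/finiteness argument rather than a fixed-point or combinatorial construction.

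First I would observe that the set of bijections from $\I_m$ to $\I_w$ is in one-to-one correspondence with the symmetric group on $\{1, \ldots, N\}$, hence has exactly $N!$ elements and is in particular finite. The function $\sigma \mapsto \sum_{i=1}^N \theta_{i\sigma(i)}$ is a real-valued function on this finite set. Therefore its maximum is attained: there exists some bijection $\sigma^*$ with $\sum_{i=1}^N \theta_{i\sigma^*(i)} \geq \sum_{i=1}^N \theta_{i\sigma(i)}$ for every bijection $\sigma$.

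Next I would relabel the women so that $i' := \sigma^*(i)$, thereby obtaining a marriage plan $\{ii'\}$ which satisfies the efficiency condition of Theorem \ref{matchingtrans}. By the equivalence part of that theorem, $\{ii'\}$ is a stable marriage plan in the FT case, which proves the corollary.

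There is essentially no obstacle here, since all the substantive work has already been done inside Theorem \ref{matchingtrans}. The only thing one might want to flag is that we are implicitly using $|\I_m| = |\I_w| = N < \infty$ to guarantee the existence of a maximizer; in an infinite setting one would have to replace finiteness by a compactness argument (e.g., weak-$*$ compactness of transport plans in the Monge--Kantorovich formulation alluded to just above the statement).
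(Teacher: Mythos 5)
Your proposal is correct and follows exactly the paper's own argument: the paper also derives the corollary by noting that the set of bijections is finite, so the maximum of $\sigma \mapsto \sum_{i=1}^N \theta_{i\sigma(i)}$ is attained, and then invokes the efficiency characterization of Theorem \ref{matchingtrans}. Your additional remark about the infinite-$N$ setting is a sensible aside but not needed here.
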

\begin{remark}
	As far as we know, the fully transferable case (\ref{Feass}) is the only case whose stable marriages are obtained by a variational argument. 
	\end{remark}

	\begin{proof} (of theorems \ref{matchingtrans})
		In Proposition \ref{t01} we obtained that FT stability implies cyclical monotonicity. We now prove that cyclical monotonicity implies efficiency. The proof follows the idea published originally by Afriat in 1963 \cite{effr}, and was introduced recently in a  much simpler form by Bresis \cite{Br}. 
		
		Let
		\be\label{pidef} -u^0_i:=\inf_{k-chains, k\in\N}\left( \sum_{l=1}^{k-1} \theta_{i_li^{'}_l}-\theta_{i_li^{'}_{l+1}}\right)
		+ \theta_{i_ki^{'}_k}-\theta_{i_ki^{'}}\ . \ee
		Let $\alpha>-u^0_i$ and consider a $k-$chain realizing
		\be\label{w1}\alpha>\left( \sum_{l=1}^{k-1} \theta_{i_li^{'}_l}-\theta_{i_li^{'}_{l+1}}\right)
		+ \theta_{i_ki^{'}_k}-\theta_{i_ki^{'}}\ee
		
		By cyclic monotonicity,  $ \sum_{l=1}^{k} \theta_{i_li^{'}_l}-\theta_{i_li^{'}_{l+1}}\geq 0$. 
		Since $i^{'}_{k+1}=i^{'}_1$,
		$$  \sum_{l=1}^{k-1} \theta_{i_li^{'}_l}-\theta_{i_li^{'}_{l+1}}\geq \theta_{i_ki^{'}_1}-\theta_{i_ki^{'}_k}$$
		so (\ref{w1}) implies
		$$ \alpha> \theta_{i_k, i^{'}_1} - \theta_{i_k, i^{'}}\geq 0 \ , $$
		in particular $u^0_i<\infty$.
		
		Hence, for any $j\in\I_m$
		\begin{multline} \alpha+\theta_{ii^{'}}-\theta_{ij^{'}} > \left( \sum_{l=1}^{k-1} \theta_{i_l, i^{'}_l}-\theta_{i_l, i^{'}_{l+1}}\right) \\
		+ \theta_{i_ki^{'}_k}-\theta_{i_ki^{'}}+\theta_{ii^{'}}-\theta_{ij^{'}} \geq -u^0_j\  \end{multline}
		where the last inequality follows by the substitution of  the $k+1-$cycle $i_1=i, i_2, \ldots, i_k, i_{k+1}=i$ in  (\ref{pidef}).
		Since $\alpha$ is any number bigger than $-u^0_i$ it follows
		\be\label{ui-cii}-u^0_i+\theta_{ii^{'}}-\theta_{ij^{'}} \geq -u^0_j\ ,\ee
		for any pair $i,j\in\I_m$. Now, let $\sigma$ be any permutation in $\I_m$ and let $j=\sigma(i)$. Then
		\be\label{sumtausigma}-u^0_i+\theta_{ii^{'}}-\theta_{i\sigma(i^{'})} \geq -u^0_{\sigma(i)}\  . \ee
		Since  $\sigma$ is a bijection on $\I_m$ as well, so $\Sigma_{i=1}^N u^0_i=\sum_{i=1}^N u^0_{\sigma(i)}$. Then, sum (\ref{sumtausigma}) over $1\leq i\leq N$ to obtain
		$$\sum_{i=1}^N \theta_{ii^{'}} \geq \sum_{i=1}^N \theta_{i\sigma(i^{'})} \ , $$
		so $\{ii^{'}\}$ is an efficient marriage plan.
		\par
		To prove that any efficient  solution is stable, we define
		$v^0_j:= \theta_{jj^{'}}-u^0_j$ so
		\be\label{ui+vi} u^0_j+v^0_{j^{'}}=\theta_{jj^{'}} \ . \ee
		Then (\ref{ui-cii}) implies
		\be\label{ui+vj} u^0_i+v^0_{j^{'}}= u^0_i+\theta_{jj^{'}}-u^0_j\geq u^0_i-u^0_i+\theta_{ij^{'}}=\theta_{ij^{'}} \  \ee
		for any $i,j$. Thus, (\ref{ui+vi},\ref{ui+vj})  establish that $\{ii^{'}\}$ is a stable marriage via Definition \ref{deffea}.
		
		Finally, the optimality condition follows immedietly from the definition of the feasibility set
		$$ \sum_1^N u_i+v_{i^{'}}=\sum_1^N u_i+v_{\sigma(i)} \geq \sum_1^N \theta_{i\sigma(i)}$$
		for any bijection $\sigma:\I_m\rightarrow \I_w$ and from (\ref{ui+vi}). 
	\end{proof}
\subsection{On existence and non-existence of stable  fake promises}
\begin{theorem}\label{comparepq}
	If the matching $\{ii^{'}\}$ is $(p,q)-$stable, then it is also $(p^{'}, q^{'})-$stable for $p^{'}\geq p$ and $q^{'}\leq q$.\par
\end{theorem}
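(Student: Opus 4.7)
My plan is to handle the two monotonicities separately. For the $p$ direction, I observe that $[x]_p = [x]_+ - p[x]_-$ is non-increasing in $p$, because $[x]_- \geq 0$; hence $p^{'} \geq p$ implies $[x]_{p^{'}} \leq [x]_p$ pointwise in $x$. Applying this with $x = \Delta^{(q)}(i_l, i^{'}_{l+1})$ and summing along any cycle $i_1, \ldots, i_k$ gives
$$\sum_{l=1}^{k}[\Delta^{(q)}(i_l, i^{'}_{l+1})]_{p^{'}} \leq \sum_{l=1}^{k}[\Delta^{(q)}(i_l, i^{'}_{l+1})]_p \leq 0,$$
which is precisely the $(p^{'}, q)$-stability condition.

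The $q$ direction is less immediate: $q \mapsto [\Delta^{(q)}(i, j^{'})]_p$ is not pointwise monotone in $q$ (e.g.\ when both arguments inside the defining minimum are negative, the function actually increases as $q$ decreases). I would fix an arbitrary cycle and study $f(q) := \sum_{l=1}^{k}[\Delta^{(q)}(i_l, i^{'}_{l+1})]_p$ as a function of $q \in [0, 1]$. A short case check of the two affine terms inside the defining $\min$ yields the affine representation $\Delta^{(q)}(i, j^{'}) = (1-q)\Delta^{(0)}(i, j^{'}) + q \Delta^{(1)}(i, j^{'})$ on $[0,1]$, and since $[x]_p = \max(x, px)$ is convex in $x$ for $p \in [0,1]$, each summand --- and therefore $f$ --- is convex in $q$ on $[0, 1]$.

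The crux, and the step I expect to be the main obstacle, is to deduce that $(p, q)$-stability already forces $(p, 0)$-stability. For this I would compare, per cycle, the value $f(0)$ with the right derivative $f^{'}(0^{+})$. Writing $A_l = \theta^m_{i_l i^{'}_{l+1}} - \theta^m_{i_l i^{'}_l}$ and $B_l = \theta^w_{i_l i^{'}_{l+1}} - \theta^w_{i_{l+1} i^{'}_{l+1}}$ and splitting the sum according to the sign of $\min(A_l, B_l)$, the right derivative at $0$ of the $l$-th summand equals $\max(A_l, B_l)$ when $\min(A_l,B_l) > 0$ and $p\max(A_l, B_l)$ when $\min(A_l, B_l) < 0$; a term-by-term comparison that uses only $\max(A_l, B_l) \geq \min(A_l, B_l)$ then gives $f^{'}(0^{+}) \geq f(0)$. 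Convexity of $f$ yields $f(q) \geq f(0) + q f^{'}(0^{+}) \geq (1+q)f(0)$ on $[0,1]$, so $f(0) > 0$ would force $f(q) > 0$ throughout, meaning any cycle witnessing $(p, 0)$-instability would also witness $(p, q)$-instability --- contradicting the hypothesis.

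With $(p, 0)$-stability in hand I close the $q$-step by a direct convex combination estimate. For $q > 0$, the interpolation $\Delta^{(q^{'})}_l = (1 - q^{'}/q)\Delta^{(0)}_l + (q^{'}/q)\Delta^{(q)}_l$ combined with convexity of $[\cdot]_p$ gives $[\Delta^{(q^{'})}_l]_p \leq (1 - q^{'}/q)[\Delta^{(0)}_l]_p + (q^{'}/q)[\Delta^{(q)}_l]_p$; summing over $l$ and using $f(0) \leq 0$ and $f(q) \leq 0$ produces $f(q^{'}) \leq 0$, that is, $(p, q^{'})$-stability (the case $q = 0$ is vacuous, forcing $q^{'}=0$). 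Combining with the $p$-step already proved then establishes the theorem for arbitrary $p^{'} \geq p$ and $q^{'} \leq q$.
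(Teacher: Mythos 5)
Your proposal is correct, but it reaches the $q$-monotonicity by a genuinely different route from the paper. The paper's engine is Lemma 3.1: the normalization identity $\min(qa+b,qb+a)=(q+1)\Delta_{\frac{1-q}{1+q}}(a,b)$ with $\Delta_r(a,b)=\frac12(a+b)-\frac r2|a-b|$ shows that $(1+q)^{-1}\Delta^{(q)}(i,j^{'})$ is nondecreasing in $q$, i.e.\ $\Delta^{(q^{'})}\leq\frac{1+q^{'}}{1+q}\Delta^{(q)}$ pointwise; since $[\cdot]_p$ is nondecreasing and positively homogeneous, each cycle sum for $q^{'}$ is bounded by $\frac{1+q^{'}}{1+q}$ times the cycle sum for $q$, and the $q$-step closes in one line (the $p$-step is the same trivial observation $[x]_{p^{'}}\leq[x]_p$ you make). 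You instead exploit the affine interpolation $\Delta^{(q)}=(1-q)\Delta^{(0)}+q\Delta^{(1)}$ together with convexity of $[x]_p=\max(x,px)$, first forcing $(p,0)$-stability via the endpoint-plus-derivative estimate $f(q)\geq f(0)+qf^{'}(0^{+})\geq(1+q)f(0)$ and then interpolating convexly between $0$ and $q$. Your inequality $f(q)\geq(1+q)f(0)$ is exactly the $q^{'}=0$ instance of the paper's lemma pushed through $[\cdot]_p$, so your argument in effect rederives the special case of the lemma it needs by a longer path; what it buys is a transparent explanation of \emph{why} pointwise monotonicity of $q\mapsto[\Delta^{(q)}]_p$ fails and why stability nevertheless propagates downward (the family $\Delta^{(q)}$ is the affine segment joining the FNT and FT utilities, and cycle sums are convex along it), while the paper's normalization trick is shorter and gives a clean quantitative pointwise comparison for every pair $q^{'}<q$ at once. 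The only loose ends in your write-up are cosmetic: the boundary case $\min(A_l,B_l)=0$ in the derivative computation (where $g_l^{'}(0^{+})=[\max(A_l,B_l)]_p\geq 0=g_l(0)$ still holds) and the degenerate case $q=0$, both of which you can dispatch in a line.
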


The proof of this Theorem follows from the definitions (\ref{introq}, \ref{pospdef}) and the following
\begin{lemma}\label{lemonotone}
	For any , $i\not=j$ and  $1\geq q>q^{'}\geq 0$,
	$$(1+q)^{-1}\Delta^{(q)}(i,j)>  (1+q^{'})^{-1}\Delta^{(q^{'})}(i,j) .  $$
\end{lemma}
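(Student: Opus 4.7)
The plan is to rewrite $\Delta^{(q)}$ in a single closed form and then analyze $(1+q)^{-1}\Delta^{(q)}$ as an explicit scalar function of $q$.

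First, I would introduce the shorthand $a := \theta^m_{ij'}-\theta^m_{ii'}$ and $b := \theta^w_{ij'}-\theta^w_{jj'}$, so the two candidates inside the minimum in (\ref{introq}) become $qa+b$ and $qb+a$ (reading the second row as the typographically symmetric expression that is consistent with (\ref{profitman})--(\ref{profitwoman})). Their difference $(qa+b)-(qb+a) = (q-1)(a-b)$ has the opposite sign to $a-b$ for every $q\in[0,1]$, so
$$\Delta^{(q)}(i,j') = \min(qa+b,\,qb+a) = q\max(a,b) + \min(a,b).$$
Writing $M := \max(a,b)$ and $m := \min(a,b)$, this is simply $\Delta^{(q)}(i,j') = qM + m$, which matches the boundary values $\Delta^{(0)}=m$ and $\Delta^{(1)}=a+b$ stated right after (\ref{introq}).

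Second, I would divide by $1+q$ and perform the algebraic split
$$\frac{\Delta^{(q)}(i,j')}{1+q} = \frac{qM+m}{1+q} = M - \frac{M-m}{1+q}.$$
Since $M-m\ge 0$ and $q\mapsto 1/(1+q)$ is strictly decreasing on $[0,1]$, the right-hand side is a non-decreasing function of $q$, and strictly increasing precisely when $M>m$, i.e.\ when $a\neq b$. Evaluating at $q$ and $q'$ with $q>q'$ yields the lemma.

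The only delicate point is that the lemma states the inequality as strict, whereas the above argument degenerates to equality in the exceptional case $a=b$ (both sides are then the common value $a$). I would handle this either by noting the implicit genericity assumption $\theta^m_{ij'}-\theta^m_{ii'}\neq \theta^w_{ij'}-\theta^w_{jj'}$, or — and this is what actually matters for Theorem \ref{comparepq} — by observing that a non-strict inequality is already enough: the map $x\mapsto [x]_p$ is non-decreasing in $x$ and non-increasing in $p$, and it is positively homogeneous, so multiplying through by the positive factor $(1+q')$ and summing over the chain lets the inequality propagate $(p,q)$-stability to $(p',q')$-stability without needing the strict form. So the main work really is the clean reduction $\Delta^{(q)}=qM+m$; once that identity is in hand, the monotonicity is a one-line derivative check.
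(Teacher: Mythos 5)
Your proof is correct and is, at bottom, the same computation as the paper's: the paper writes $\Delta^{(q)}(i,j^{'})=(q+1)\Delta_r(a,b)$ with $\Delta_r(a,b)=\frac{1}{2}(a+b)-\frac{r}{2}|a-b|$ and $r=\frac{1-q}{1+q}$, which is exactly your identity $\Delta^{(q)}=q\max(a,b)+\min(a,b)$ once one substitutes $\frac{a+b}{2}=\frac{M+m}{2}$ and $|a-b|=M-m$; both arguments then reduce the claim to the monotonicity in $q$ of an explicit scalar expression. Your packaging $(1+q)^{-1}\Delta^{(q)}=M-(M-m)/(1+q)$ is arguably more transparent than the paper's detour through the one-parameter family $\Delta_r$ and the decreasing reparametrization $q\mapsto\frac{1-q}{1+q}$, but it buys nothing essentially new. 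You are also right to flag the strictness issue, and this is the one place where you improve on the source: the paper's proof only invokes that $r\mapsto\Delta_r(a,b)$ is ``monotone not increasing,'' which yields $\geq$ rather than $>$ in the degenerate case $a=b$, so the lemma as stated is really a non-strict inequality there; your observation that the weak form, together with the monotonicity and positive homogeneity of $x\mapsto[x]_p$, is all that the proof of Theorem \ref{comparepq} actually requires is exactly the right way to close that gap.
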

\begin{proof}
	For $a,b\in\R$ and $r\in[0,1]$ define
	$$\Delta_r(a,b):= \frac{1}{2}(a+b)-\frac{r}{2}|a-b| \ . $$
	Observe that $\Delta_1(a,b)\equiv \min(a,b)$. In addition, $r \mapsto \Delta_r(a,b)$ is monotone not increasing in $r$.
	A straightforward calculation yields
	$$ \min(qa+b, qb+a)
	= \Delta_1(qa+b, qb+a)
	=(q+1)\Delta_{\frac{1-q}{1+q}}(a,b) \ , $$
	and the Lemma follows from the above observation, upon inserting $a=\theta_m(i,j)-\theta_m(i,i)$ and $b=\theta_w(i,j)-\theta_w(j, j)$.
\end{proof}

What can be said about the existence of s $(p,q)-$ stable matching in the general case? Unfortunately, we can prove now only a negative result:
\begin{prop}\label{propimpos}
	For any $1\geq q>p\geq 0$, a stable marriage does not exist unconditionally.
\end{prop}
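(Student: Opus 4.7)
The plan is to produce, for each pair $(p,q)$ with $0 \leq p < q \leq 1$, an explicit counterexample with $N = 2$. Since in this case only two matchings exist, namely $\{11', 22'\}$ and $\{12', 21'\}$, and the only nontrivial chain in Definition \ref{defshapleyq} is the transposition on $\{1,2\}$, $(p,q)$-stability of each matching reduces to a single scalar inequality; this is what makes the analysis transparent.

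First I would introduce free parameters $\alpha, \beta$ (to be tuned as functions of $p, q$) and take symmetric reward matrices $\theta^m_{ij'} = \theta^w_{ij'}$ with $\theta^m_{11'} = 0$, $\theta^m_{12'} = \alpha$, $\theta^m_{21'} = 1$, $\theta^m_{22'} = \beta$. The symmetry ensures that in the two arguments of the minimum in (\ref{introq}) only one branch is active --- their difference equals $(1-q)$ times a quantity of definite sign --- so a direct computation gives, under the matching $\{11',22'\}$,
\[ \Delta^{(q)}(1,2') = (q+1)\alpha - \beta, \qquad \Delta^{(q)}(2,1') = 1+q-\beta, \]
and under $\{12',21'\}$,
\[ \Delta^{(q)}(1,1') = -(q+\alpha), \qquad \Delta^{(q)}(2,2') = (q+1)\beta - q - \alpha. \]

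Next I would apply $[\cdot]_p$, choosing $\beta$ large enough that the obviously negative terms are indeed negative. The $2$-cycle blocking condition $\sum [\Delta^{(q)}]_p > 0$ then reduces to $(1+q)(\alpha+p) > (1+p)\beta$ for the first matching and to $(1+q)\beta > (1+p)(q+\alpha)$ for the second. Hence both matchings fail $(p,q)$-stability simultaneously iff $\beta$ lies in the open interval
\[ \frac{(1+p)(q+\alpha)}{1+q} \;<\; \beta \;<\; \frac{(1+q)(\alpha+p)}{1+p}, \]
which is nonempty iff $(1+q)^2(\alpha+p) > (1+p)^2(\alpha+q)$. Expanding and factoring produces the clean identity
\[ (1+q)^2(\alpha+p) - (1+p)^2(\alpha+q) = (q-p)\bigl[\alpha(2+p+q) + pq - 1\bigr]. \]
Under the hypothesis $q > p$ the first factor is strictly positive, and the bracket is positive for any $\alpha > (1-pq)/(2+p+q)$; enlarging $\alpha$ slightly further also yields $(1+p)(q+\alpha)/(1+q) > 1+q$, which validates the case distinctions used above. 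Any $\beta$ in the resulting interval then produces an instance with no $(p,q)$-stable matching.

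The main obstacle is the bookkeeping in the middle step: each $\Delta^{(q)}$ is a minimum of two affine expressions, and each $[\cdot]_p$ splits into sign-based cases, so several sub-cases have to be checked to confirm the clean linear inequalities quoted above. A useful consistency check --- and in some sense the heart of the argument --- is that the factor $(q-p)$ in the key identity vanishes precisely at $q = p$, so the construction degenerates at the boundary, which is exactly what one expects given Theorem \ref{comparepq} and the existence of FNT and FT stable matchings at the symmetric values $q = p = 0$ and $q = p = 1$.
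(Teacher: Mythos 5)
Your proposal is correct and follows the same overall strategy as the paper: take $N=2$, write down the two $2$-cycle blocking inequalities (\ref{t1}) and (\ref{t2}), and show they can be satisfied simultaneously precisely because $q>p$. I checked your four values of $\Delta^{(q)}$, the reduction of the two blocking conditions to $(1+q)(\alpha+p)>(1+p)\beta$ and $(1+q)\beta>(1+p)(q+\alpha)$, and the factorization $(1+q)^2(\alpha+p)-(1+p)^2(\alpha+q)=(q-p)\left[\alpha(2+p+q)+pq-1\right]$; all are right. The execution differs from the paper's: the author exploits Lemma \ref{lemonotone} to write $\Delta^{(q)}=(q+1)\Delta_r(a,b)$ with $r=\frac{1-q}{1+q}$, treats the four \emph{differences} $a_1,a_2,b_1,b_2$ as free parameters, and sets $a_1=a_2=a>0$, $b_1=b_2=-b<0$, so that each $\min$ collapses trivially and both blocking conditions become the single ratio condition $p<a/b<q$ --- no case analysis at all. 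Your symmetric matrix $\theta^m=\theta^w$ instead forces you to track which branch of the $\min$ is active and the signs inside $[\cdot]_p$; in exchange you get a fully explicit reward table and a clean algebraic identity in which the factor $(q-p)$ isolates the hypothesis. One point to make explicit when you write out the sub-cases: the branch selection in $\Delta^{(q)}(1,1')$ and $\Delta^{(q)}(2,2')$ is \emph{not} a consequence of symmetry alone --- it requires $\alpha\geq 1$ (the branch difference is $(1-q)(\alpha-1)$ up to sign). This is harmless because your later requirement $(1+p)(q+\alpha)/(1+q)>1+q$ forces $\alpha>(1+q)^2/(1+p)-q\geq 1$ whenever $q\geq p$, but the logical dependence should be stated rather than attributed to symmetry.
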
\begin{figure} 
\centering
\includegraphics[height=4.cm, width=10.cm]{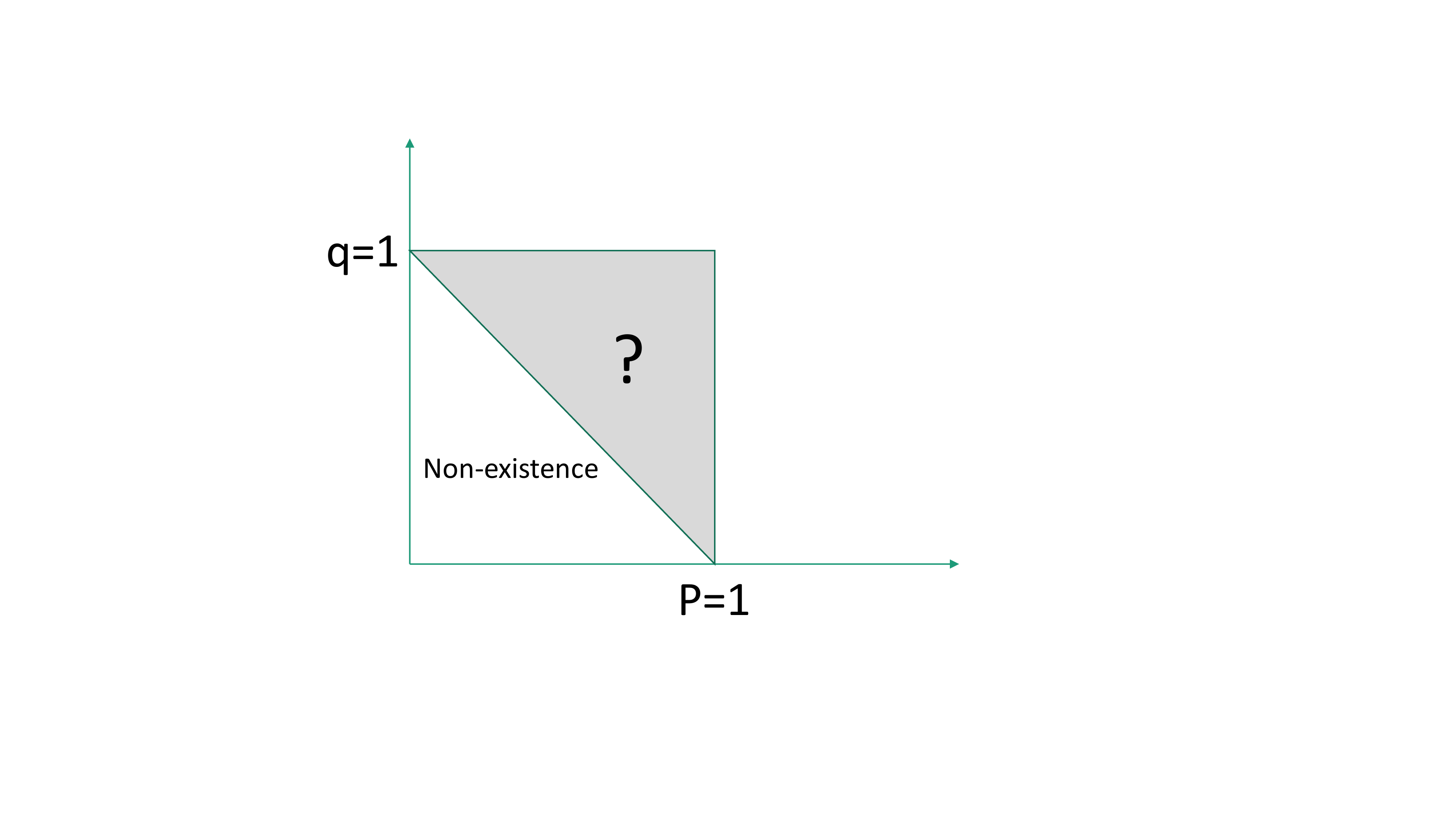}\\
\caption{Conjecture: Is there an unconditional existence of stable marriages in the gray area? }
\end{figure}
\begin{proof}
	We only need to present a counter-example. So, let $N=2$. To show that the matching $11^{'}, 22^{'}$ is not stable we have to show
	\be\label{t1}\left[ \Delta^{(q)}(1,2^{'})\right]_p+ \left[ \Delta^{(q)}(2,1^{'})\right]_p>0\ee
	while, to show that $12^{'}, 21^{'}$ is not stable we have to show
	\be\label{t2} \left[ \Delta^{(q)}(1,1^{'})\right]_p+ \left[ \Delta^{(q)}(2,2^{'})\right]_p>0 \ . \ee
	By definition (\ref{introq}) and Lemma \ref{lemonotone}
	$$ \Delta^{(q)}(1,2^{'})= (q+1)\Delta_r\left( \theta^m_{12^{'}}-\theta^m_{11^{'}}, \theta^w_{12^{'}}-\theta^w_{22^{'}}\right)$$
	$$ \Delta^{(q)}(2,1^{'})= (q+1)\Delta_r\left( \theta^m_{21^{'}}-\theta^m_{22^{'}}, \theta^w_{21^{'}}-\theta^w_{11^{'}}\right)$$
	where $r=\frac{1-q}{1+q}$. To obtain $\Delta^{(q)}(1,1^{'}), \Delta^{(q)}(2,2^{'})$ we just have to exchange man $1$ with man $2$, so
	$$ \Delta^{(q)}(2,2^{'})= (q+1)\Delta_r\left( \theta^m_{22^{'}}-\theta^m_{21^{'}}, \theta^w_{22^{'}}-\theta^w_{12^{'}}\right)$$
	$$ \Delta^{(q)}(1,1^{'})= (q+1)\Delta_r\left( \theta^m_{11^{'}}-\theta^m_{12^{'}}, \theta^w_{11^{'}}-\theta^w_{21^{'}}\right) \ . $$
	All in all, we only have 4 parameters to play with:
	$$ a_1:= \theta^m_{12^{'}}-\theta^m_{11^{'}}, \ \ a_2=\theta^w_{12^{'}}-\theta^w_{22^{'}} \ , $$
	$$ b_1=\theta^m_{21^{'}}-\theta^m_{22^{'}}, \ b_2=\theta^w_{21^{'}}-\theta^w_{11^{'}} \ ,  $$
	so the two conditions to be verified are
	$$ [\Delta_r(a_1, a_2)]_p+ [\Delta_r(b_1, b_2)]_p>0 \ \ ; \ \ [\Delta_r(-a_1, -b_2)]_p+ [\Delta_r( -b_1, -a_2)]_p >0 \ . $$
	Let us insert $a_1=a_2:=a>0$. $b_1=b_2:=-b$ where $b>0$. So
	$$[\Delta_r(a_1,a_1)]_p=a, \ \ \ [\Delta_r(b_1, b_2)]_p=-pb \ , $$
	while $\Delta_r(-a_1, -b_2)=\Delta_r(-b_1, -a_2)=\frac{b-a}{2}-\frac{r}{2}(a+b)$. In particular, the condition
	$\frac{a}{b}< \frac{1-r}{1+r}$ implies $[\Delta_r(-a_1, -b_2)]_p=[\Delta_r( -b_1, -a_2)]_p>0$ which verifies (\ref{t2}). On the other hand, if $a-pb>0$ then (\ref{t1}) is verified. Both conditions can be verified if $\frac{1-r}{1+r}>p$. Recalling $q=\frac{1-r}{1+r}$ we obtain the result.
\end{proof}
\begin{conj} If If $0< p<q< 1$ then there always exists a $(p,q)$ stable marriage (c.f. Fig [2]). 
	\end{conj}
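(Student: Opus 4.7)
The statement as written is in direct tension with Proposition \ref{propimpos}: the proof of that proposition has just exhibited, for every pair $(p,q)$ with $0<p<q<1$ and any scalars $a,b>0$ satisfying $pb<a<qb$, a two-person instance admitting no $(p,q)$-stable matching. Any successful proof of the conjecture would therefore either contradict that construction or quietly supplement its hypothesis with a structural condition that excludes the counter-example. My plan is to outline the most natural existence strategy, indicate precisely where it breaks against the counter-example, and thereby identify what auxiliary hypothesis the conjecture ought to carry.

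The natural approach is to recast $(p,q)$-stability as non-emptiness of the core of a cooperative NTU game, after the pattern of Theorem \ref{cornoemp}. Concretely I would try to construct, for each pair $ij^{'}\in\I_m\times\I_w$, a pairwise bargaining set ${\cal F}^{(p,q)}(ij^{'})\subset\R^2$ modelled on the taxed-transfer examples of Section \ref{gencase}: its defining inequalities should encode the bribability and trust conditions (\ref{profitman})--(\ref{profitwoman}) that underlie the definition of $\Delta^{(q)}$. I would then verify Assumption \ref{genass} for this family (closedness with non-empty interior, monotonicity, and a uniform two-sided bound), so that Theorem \ref{cornoemp} applies and supplies a matching $\{ii^{'}\}$ together with cuts $(u_1,\ldots,v_N)$ in the corresponding core. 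The remaining step would be to deduce the chain inequality $\sum_l [\Delta^{(q)}(i_l,i^{'}_{l+1})]_p \le 0$ of Definition \ref{defshapleyq} from the pairwise non-blocking condition, by a telescoping argument along the cycle $i_1 i^{'}_1,\ldots,i_k i^{'}_k$ in the spirit of the proof of Proposition \ref{t01}, combined with the monotonicity of Lemma \ref{lemonotone} and Theorem \ref{comparepq} to propagate any boundary existence (such as at $p=q=1$) inwards.

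The main obstacle, and the reason the plan cannot close in the stated region, is precisely the counter-example of Proposition \ref{propimpos}: in the regime $p<q$ there exist instances in which \emph{every} matching of $\I_m$ with $\I_w$ admits at least one blocking cycle in the sense of Definition \ref{defshapleyq}. Hence no family ${\cal F}^{(p,q)}$ obeying Assumption \ref{genass} can faithfully translate that chain condition into core non-emptiness for every instance. Consequently, a proof of the conjecture as literally worded requires one of three substantive interventions: weakening Definition \ref{defshapleyq} by bounding the admissible chain length $k$; strengthening the utility hypotheses to exclude the sign pattern $a_1=a_2>0,\ b_1=b_2<0$ used in the counter-example (so that $\Delta^{(q)}(i,j^{'})$ and $\Delta^{(q)}(j,i^{'})$ cannot simultaneously be positive); or reinterpreting the intended parameter region. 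Determining which of these restorations leaves a nontrivial statement is where the real work lies, and is where I would direct effort rather than towards any particular computation.
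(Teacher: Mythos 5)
You have correctly identified the central issue: the statement as printed, with $0<p<q<1$, is flatly contradicted by Proposition \ref{propimpos}, whose two-person counter-example (take $a_1=a_2=a>0$, $b_1=b_2=-b<0$ with $pb<a<qb$) destroys every matching precisely in the regime $q>p$. Your reading of the inequalities matches the paper's computation exactly. Given Figure 2 and the monotonicity in Theorem \ref{comparepq} (stability propagates to larger $p$ and smaller $q$, with existence known at the corners $p=q=0$ and $p=q=1$), the intended open region is evidently $0<q<p<1$; the printed inequality is a typo. So your diagnosis is the right one, and the third of your proposed ``restorations'' (reinterpreting the parameter region) is the correct resolution, rather than bounding chain lengths or restricting the utilities.

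That said, be aware that the paper offers no proof of this statement: it is labelled a conjecture and left open, so there is nothing to compare your strategy against. Your proposed route --- encoding the conditions (\ref{profitman})--(\ref{profitwoman}) into pairwise bargaining sets ${\cal F}^{(p,q)}(ij^{'})$ satisfying Assumption \ref{genass} and invoking the Scarf-type Theorem \ref{cornoemp} --- is a plausible attack on the corrected region, but the hard step you gloss over is real: Theorem \ref{cornoemp} only rules out blocking by \emph{pairs} (coalitions of one man and one woman), whereas Definition \ref{defshapleyq} quantifies over chains of arbitrary length with partial loss-sharing governed by $p$, and no telescoping argument in the style of Proposition \ref{t01} is available once $[\,\cdot\,]_p$ is nonlinear ($p<1$). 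That gap is exactly why the question remains open in the paper.
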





\begin{thebibliography}{20}
	\bibitem{effr}  Afriat, S.N: \emph{The system of inequalities $a_{rs}\geq X_r-X_s$} , Math. Proc. Camb. Philos. Soc. 59 , 125-133, 1963
		\bibitem{Bec1}	Becker, G. S. \emph{A Theory of Marriage: Part I}, Journal of Political Economy, 81(4), pp. 813(46), 1973
	\bibitem{Bec2} Becker, G. S. \emph{A Treatise on the Family},  Harvard University Press, 1991
	
	\bibitem{bon}  Bondareva, O:    {\em Certain applications of the methods of linear programming to the theory of
	cooperative games.},
, (1963).  Problemy Kibernetiki, 10, 119-139 (in Russian).
\bibitem{Br}Bresis, H: \emph{Remarks on the Monge-Kantorovich problem in the discrete
	setting}, C. R. Acad. Sci. Paris, Ser. I 356  207?213, 2018
\bibitem{ECA}
Choo, E., and A. Siow:  \emph{Who Marries Whom and Why,}  Journal of Political Economy, 114(1),
175-201, 2006
	
	\bibitem{GSH}
	Gale,  D and L.S. Shapley:
	\emph{College Admissions and the Stability of Marriage}
	.  American Mathematics Monthly. 69, 9-15. 1962.
	
	\bibitem{Dub} Dubins L.E  and Freedman D.: \emph{Machiavelli and the Gale Shaply  Algorithm}, Amer. Math. Month, 88(7),  485-494, 1981

	
	

	

	

	\bibitem{Gal}
Galichon, A: \emph{Optimal Transport Methods in Economics},
Princeton University Press, 2016
\bibitem{GASD}	Galichon, Alfred, Scott D.K and Weber S, \emph{Costly Concessions: An Empirical Framework for Matching with Imperfectly Transferable Utility},  Working Paper, January 2016.

\bibitem{MO}
Gelain M,  Pini M.S, ,   Rossi, F K,  Venable K.B , and  Walsh T: . 2010. \emph{Male optimality and uniqueness in stable marriage problems with partial orders},  Proceedings of the 9th International Conference on Autonomous Agents and Multiagent Systems: volume 1 - Volume 1 (AAMAS '10), Vol. 1. International Foundation for Autonomous Agents and Multiagent Systems, Richland, SC, 1387-1388, 2010
	\bibitem{Gi} Gilles, R.P:  {\em The Cooperative Game Theory of Networks and Hierarchies}, Theory and Decision Library C, Springer 2010.
	\bibitem{Ka} Kantorovich, L: \emph{ On the translocation of masses.},  C.R. (Doklady) Acad. Sci. URSS (N.S.), 37:199-
201, 1942.
	\bibitem{knu}  Knuth D.E: \emph{Marriages Stables} , Montreal Univ. Press, Montreal, 1976. 
	\bibitem{Mo}Monge, G: \emph{ M\'{e}moire sur la th\'{e}orie des d\'{e}blais et des remblais}, In  Histoire de l\'{A}cad\'{e}mie Royale des Sciences de Paris,  666-704, 1781
	\bibitem{Roc} Rockafellar, T: \emph{Convex analysis},  Number 28 in Princeton Mathematical Series. Princeton, NJ: Princeton University Press, 1970
	
	\bibitem{Sc} SCarf, H:  \emph{The core of an N-person game}, 1967, in Econometrica, Vol. 35, No. 1, 50-69.
	\bibitem{shap1}
	Shapley, L. S.
	{\em  On balanced sets and cores.},  (1967),  Naval Research Logistics Quarterly, 14,
	453-460.



	
	

	
	\bibitem{SS} Shapley, L.S and Shubik, M: : \emph{ The assignment game I: The core}, Int. Journal of Game Theory, 1(1) 111-130, 1971
	\bibitem{Vil} Villani, C: \emph{Topics in Optimal Transportation}, vol. 58 of Graduate Studies in Mathematics, AMS, Providence, RI, 2003

	
	
		






\end{thebibliography}
\end{document}